\newcommand{\RR}{\mathbb{R}}
\newcommand{\NN}{\mathbb{N}}
\newcommand{\EE}{\mathbb{E}}
\newcommand{\Var}{\operatorname{Var}}
\newcommand{\renta}[2]{\operatorname{H}_{#1}(#2)}
\newcommand{\softO}{\widetilde{\mathcal{O}}}
\newcommand{\softTheta}{\widetilde{\Theta}}
\newcommand{\dist}{\ensuremath \mathcal{D}}
\newcommand{\keys}{\mathcal{K}}
\newcommand{\oracle}{\mathbbm{1}_k}
\newcommand{\multiOracle}{\mathbbm{1}_{k_1},\ldots,\mathbbm{1}_{k_m}}
\newcommand{\TC}{T_C(\dist)}
\newcommand{\TQ}{T_Q(\dist)}
\newcommand{\coreset}{\mathcal{C}_{\chi}^{n,\delta}}
\newcommand{\typical}{\mathcal{T}_{\chi}^{n,\delta}}
\newcommand{\ent}{\ensuremath \operatorname{H}}
\newcommand{\keyenum}{{\normalfont \textsc{GetKey}_{\dist}}}
\newcommand{\fullguess}{\normalfont \textsc{KeyGuess}}
\newcommand{\multiguess}{\normalfont \textsc{MultiKeyGuess}}
\newcommand{\abortguess}{\normalfont \textsc{AbortedKeyGuess}}
\newcommand{\qguess}{\normalfont \textsc{QuantumMultiKeyGuess}}
\newcommand{\B}{\mathcal{B}}
\newcommand{\T}{\mathcal{T}}
\newcommand{\D}{\mathcal{D}}
\newcommand{\Z}{\mathcal{Z}}
\newcommand{\K}{\mathcal{K}}
\Crefname{algocf}{Algorithm}{Algorithms}
\Crefname{heuristic}{Heuristic}{Heuristics}
\Crefname{estimate}{Heuristic Claim}{Heuristic Claims}
\definecolor{midnightblue}{RGB}{0, 73, 122}
\definecolor{tealgreen}{RGB}{0, 135, 120}
\definecolor{goldenrod}{RGB}{255, 190, 0}
\definecolor{coralred}{RGB}{204, 85, 0}
\definecolor{plum}{RGB}{153, 51, 102}
\definecolor{olivegreen}{RGB}{107, 142, 35}
\definecolor{royalblue}{RGB}{25, 100, 190}
\definecolor{forestgreen}{RGB}{0, 95, 70}
\definecolor{amber}{RGB}{255, 150, 0}
\definecolor{brickred}{RGB}{165, 40, 40}
\definecolor{slategray}{RGB}{112, 128, 144}
\definecolor{lightgreen}{rgb}{0.86, 0.93, 0.78}
\definecolor{bordergreen}{rgb}{0.55, 0.76, 0.74}
\definecolor{lightblue}{rgb}{0.70, 0.90, 0.99}
\definecolor{borderblue}{rgb}{0.01, 0.66, 0.96}
\definecolor{lightamber}{rgb}{1, 0.93, 0.70}
\definecolor{borderamber}{rgb}{1, 0.76, 0.03}
\definecolor{lightcolor4}{rgb}{ 0.93, 0.70, 1}
\definecolor{bordercolor4}{rgb}{0.76, 0.03, 1}
\definecolor{lightcolor5}{rgb}{0.78,0.86,0.93}
\definecolor{bordercolor5}{rgb}{0.74,0.55,0.76}
\newcommand{\anote}[1]{\stepcounter{notecount}\todo[inline,bordercolor=bordercolor4,linecolor=bordercolor4,color=lightcolor4]{\footnotesize{\sc \bf Alex:} #1}{}}
\newcommand{\jnote}[1]{\stepcounter{notecount}\todo[inline,bordercolor=borderamber,linecolor=borderamber,color=lightamber]{\footnotesize{\sc \bf Julian:} #1}{}}
\def\addlegendimage{\csname pgfplots@addlegendimage\endcsname}
\title{Super-Quadratic Quantum Speed-ups\\ and Guessing Many Likely Keys}
\author{
		Timo Glaser
		\href{https://orcid.org/0009-0001-8970-3153}
		{\protect\includegraphics[height=\fontcharht\font`B]{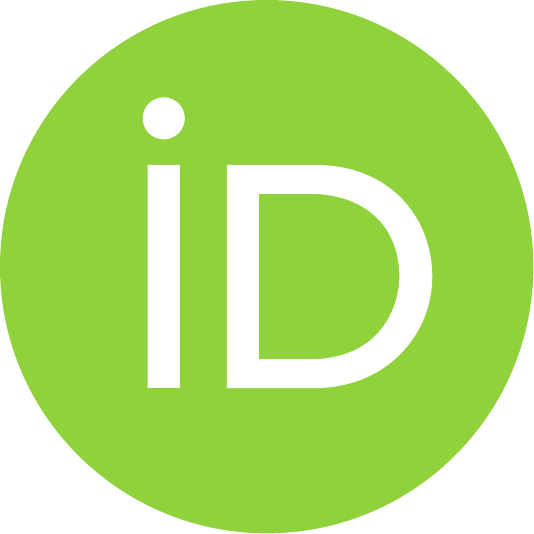}}
		\and
		Alexander May
		\href{https://orcid.org/0000-0001-5965-5675}
		{\protect\includegraphics[height=\fontcharht\font`B]{orcid.pdf}}
		\and
		Julian Nowakowski
		\href{https://orcid.org/0000-0003-3066-0133}
	{\protect\includegraphics[height=\fontcharht\font`B]{orcid.pdf}}
	}
\institute{
	Ruhr-University Bochum, Bochum, Germany\\ \email{\{timo.glaser,alex.may,julian.nowakowski\}@rub.de}
}
\begin{document}

\maketitle

\begin{abstract}
    We study the fundamental problem of guessing cryptographic keys, drawn from some non-uniform probability distribution $\D$, as e.g. in LPN, LWE or for passwords. The optimal classical algorithm enumerates keys in decreasing order of likelihood. The optimal quantum algorithm, due to Montanaro (2011), is a sophisticated Grover search.

    We give the first tight analysis for Montanaro's algorithm, showing that its runtime is $2^{\ent_{2/3}(\dist)/2}$, where $\ent_{\alpha}(\cdot)$ denotes Renyi entropy with parameter~$\alpha$.
    Interestingly, this is a direct consequence of an information theoretic result called Arikan's Inequality (1996) -- which has so far been missed in the cryptographic community -- that tightly bounds the runtime of classical key guessing by $2^{\ent_{1/2}(\dist)}$.
    
    Since $\ent_{2/3}(\dist) < \ent_{1/2}(\dist)$ for every non-uniform distribution $\dist$, we thus obtain a \emph{super-quadratic} quantum speed-up $s>2$ over classical key guessing.
     To give some numerical examples, for the binomial distribution used in Kyber, and for a typical password distribution, we obtain quantum speed-up $s>2.04$. For the $n$-fold Bernoulli distribution with parameter $p=0.1$ as in LPN, we obtain $s > 2.27$.  For small error LPN with $p=\Theta(n^{-1/2})$ as in Alekhnovich encryption, we even achieve \emph{unbounded} quantum speedup $s = \Omega(n^{1/12})$.

     As another main result, we provide the first thorough analysis of guessing in a multi-key setting.
     Specifically, we consider the task of attacking many keys sampled independently from some distribution $\dist$, and aim to guess a fraction of them.
     For product distributions $\D = \chi^n$, we show that any constant fraction of keys can be guessed within $2^{\ent(\dist)}$ classically and $2 ^{\ent(\dist)/2}$ quantumly per key, where $\ent(\chi)$ denotes Shannon entropy.
     In contrast, Arikan's Inequality implies that guessing a single key costs $2^{\ent_{1/2}(\dist)}$ classically and $2^{\ent_{2/3}(\dist)/2}$ quantumly.
     Since\linebreak $\ent(\dist) < \ent_{2/3}(\dist) < \ent_{1/2}(\dist)$, this shows that in a multi-key setting the guessing cost per key is substantially smaller than in a single-key setting, both classically and quantumly.

\end{abstract}

\section{Introduction}


\subsubsection{Key guessing.}
The most fundamental problem in cryptanalysis is the \emph{key guessing problem}.
Formally, let $k \in \keys$ be a cryptographic key sampled from some key space $\keys$, and let $\oracle: \keys \to \{0,1\}$ be an oracle with
$$
	\oracle(x) := \begin{cases}
		1 & \text{if } x = k,\\
		0 & \text{else.} 
	\end{cases}
$$ 
The key guessing problem asks to find $k$, given oracle access to $\oracle(\cdot)$.

The optimal \emph{classical} algorithm for key guessing is brute force search, i.e., iterating over the key space $\keys$ in uniformly random order until one finds~$k$.
On expectation, brute force finds~$k$ after $\Theta(|\keys|)$ steps.
The optimal \emph{quantum} algorithm is Grover search~\cite{Grover96}, which achieves a square root speed-up over the classical algorithm, thus solving the key guessing problem in time $\Theta(\sqrt{|\keys|})$.

\subsubsection{Key guessing from a distribution.}
The situation changes if the attacker knows the probability distribution $\dist$ of the key $k$.
In that case, the optimal classical algorithm for key guessing is to query $\oracle(\cdot)$ on keys $x \in \keys$ in decreasing order of probability.
This approach has expected runtime
$$
	\TC := \sum_{i=1}^{|\keys|} i \cdot p_i,
$$
where $p_i$ is the probability of the $i$-th likeliest key.

An information theoretic inequality by Arikan~\cite{arikan1996inequality} shows
\begin{equation}
    \label{eq:tc-bound}
    \frac{2^{\ent_{1/2}(\dist)}}{1+\log_2|\keys|}  \leq \TC \leq 2^{\ent_{1/2}(\dist)}
\end{equation}
where $\ent_{1/2}(\dist)$ is the \emph{R\'enyi entropy} of the distribution~$\dist$ with parameter $\frac{1}{2}$.
Notice, if $\dist$ is the uniform distribution over $\keys$, then $2^{\ent_{1/2}(\dist)} = |\keys|$.
However, for non-uniform~$\dist$, we have $2^{\ent_{1/2}(\dist)} < |\keys|$.
Hence, for $\dist$ different from the uniform distribution, enumerating keys in decreasing order of likelihood yields a non-trivial speed-up.

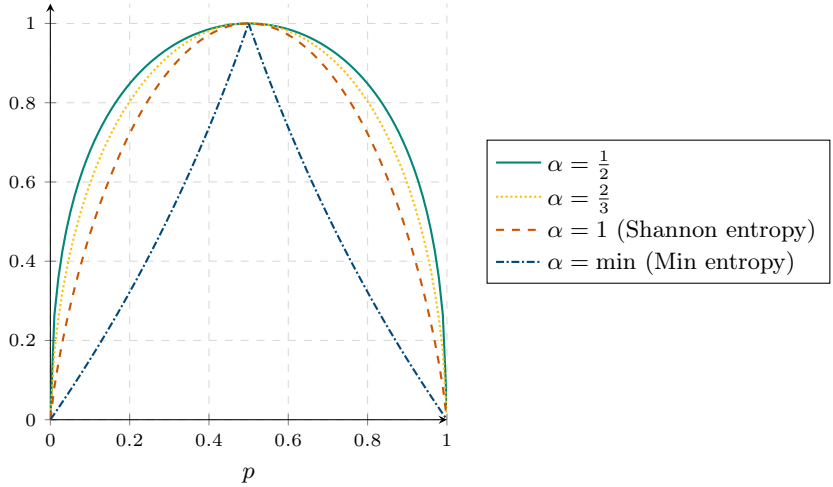
\begin{figure}
    \centering

    \begin{tikzpicture}
            \begin{axis}[
            axis lines=left,
            cycle list name=exotic,
            grid = major,
            grid style = {dashed, gray!30},
            x=150,
            y=150,
            xmin=0,
            xmax=1,
            ymin=0,
            ymax=1.05,
            xlabel = {$p$},
            xticklabel style = {font=\scriptsize},		
            yticklabel style = {font=\scriptsize, 
                /pgf/number format/fixed,
                /pgf/number format/precision=1 },
            legend style = {
             	at = {(1.1,0.5)},
             	anchor = west, 
            },
            legend cell align={left},
            ]
            
            \addplot [domain=0:1,samples=100,thick,tealgreen] {2*ln(sqrt(x)+sqrt(1-x))/ln(2)};
            \addplot [domain=0:1,samples=100,
            thick,densely dotted,goldenrod] {3*ln(x^(2/3)+(1-x)^(2/3))/ln(2)};
            \addplot [domain=0:1,samples=100,thick,dashed,coralred] {-x*ln(x)/ln(2)-(1-x)*ln(1-x)/ln(2)};
            \addplot [domain=0:0.5,samples=100,thick,densely dashdotted,midnightblue] {-ln(1-x)/ln(2)};
            \addplot [domain=0.5:1,samples=100,thick,densely dashdotted,midnightblue] {-ln(x)/ln(2)};

            \addlegendentry{$\alpha = \frac{1}{2}$};
            \addlegendentry{$\alpha = \frac{2}{3}$};
            \addlegendentry{$\alpha = 1$ (Shannon entropy)};
            \addlegendentry{$\alpha = \text{min}$ (Min entropy)};

            \end{axis}
    \end{tikzpicture}
    
    \caption{Rènyi entropy $\ent_{\alpha}(\text{Ber}(p))$ of the Bernoulli distribution $\text{Ber}(p)$ for various $\alpha$.}
    \label{fig:binomial-classic}
\end{figure}

For instance, suppose that $\dist$ is the $n$-fold Bernoulli distribution $\text{Ber}(p)^n$ with parameter $p \in (0,1)$.
That is, $\dist$ samples binary strings from $\{0,1\}^n$ and sets each bit independently to 1 with probability~$p$.
Then
$$
    |\keys| = 2^n, \quad \text{and}\quad 2^{\ent_{1/2}(\dist)} = 2^{\ent_{1/2}(\text{Ber}(p))n}.
$$
As \cref{fig:binomial-classic} shows, we have $\ent_{1/2}(\text{Ber}(p)) < 1$ for every $p \neq \frac{1}{2}$.
Hence, for every constant $p \neq \frac{1}{2}$, enumerating keys in decreasing order of likelihood improves over brute-force by an exponential factor
$$
    \frac{|\keys|}{\TC} = \softTheta\left(\frac{2^n}{2^{\ent_{1/2}(\text{Ber}(p))n}}\right) = 2^{\Omega(n)}.
$$
On the other hand, if $p = \frac{1}{2}$, then $\text{Ber}(p)^n$ is the uniform distribution over $\{0,1\}^n$, and we do not improve over brute force.

Somewhat curiously, Arikan's result seems to be widely unknown in the cryptographic community.
Consider, for instance, the recent line of research on \emph{hybrid attacks} on lattice-based crypto~\cite{matzov,C:DucPul23,EPRINT:AlbShe22,budMar23}.
Hybrid attacks try to recover a part of the secret key via key guessing, and the remaining part via lattice-based techniques.
The cryptographic literature contains the following results for key guessing in hybrid attacks:
\begin{itemize}
	\item Bernstein~\cite{EPRINT:Bernstein23c} uses the loose upper bound $\TC \leq |\keys|$ in the analysis of his hybrid attack.
	Since for any $\dist$ different from the uniform distribution we have $2^{\ent_{1/2}(\dist)} < |\keys|$, Arikan's result shows that Bernstein \emph{overestimates} the runtime of key guessing.
	\item The analysis of MATZOV~\cite{matzov} claims (without proof) that\linebreak $\TC = 2^{\ent(\dist)}$, with $\ent(\dist)$ the \emph{Shannon entropy} of $\dist$.
	Since for non-uniform $\dist$ we have $2^{\ent_{1/2}(\dist)} > 2^{\ent(\dist)}$, it follows from Arikan's result that MATZOV \emph{underestimates} the complexity of key guessing.
	\item Independently of Arikan's result, MATZOV's inaccuracy has been criticized by follow-up works:
	Ducas-Pulles~\cite{C:DucPul23} give counterexample distributions $\dist$ for which MATZOV's claim is wrong.
	Albrecht-Shen~\cite{EPRINT:AlbShe22} try to improve MATZOV's analysis by computing an upper bound on $\TC$ with discrete Gaussian $\dist$.
	Budroni-Mårtensson~\cite{budMar23} give an efficient algorithm that, for certain distributions $\dist$, can compute $\TC$ numerically.
\end{itemize}

\subsubsection{Quantum key guessing from a distribution.}
For the quantum setting, Ambainis and de Wolf~\cite{ambainis2001average} were the first to show super-quadratic speed-ups for non-uniform distributions. 
Montanaro~\cite{montanaro2011quantum} gave an optimal algorithm that, given a distribution $\dist$, solves the key guessing problem in time
$$
	\TQ := \sum_{i=1}^{|\keys|}  \sqrt{i} \cdot p_i.
$$
Let $\TC = \TQ^s$ for some {\em quantum speed-up} $s$. 
For any distribution $\dist$, we have
\begin{align}
\label{eq:montanaro-speedup}
    \TQ \leq \sqrt{\TC}.
\end{align}
Hence, Montanaro's quantum algorithm obtains a quantum speed-up over the optimal classical algorithm of at least 
\[
  s = \frac{\log \TC}{\log \TQ} \geq 2.
\]
We call a speed-up with $s=2$ {\em quadratic}, $s>2$ {\em super-quadratic}, $s=\Theta(1)$ {\em polynomial}, $s=\omega(1)$ {\em super-polynomial}, and with $s=\Theta(n)$ {\em exponential}.

Montanaro showed that there exist probability distributions for which \Cref{eq:montanaro-speedup} is strict, i.e., $\TQ < \sqrt{\TC}$, achieving a super-quadratic speed-up. 
However, these distributions are rather artificial.
For general distributions, the impact of Montanaro's algorithm seemed unclear so far.
Especially, for distributions of cryptographic interest no super-quadratic speed-up $s>2$ is known.

\subsection{Our Contributions}

\subsubsection{Tight Analysis of Montanaro's Quantum Algorithm.}
We observe that Arikan's inequality~\cite{arikan1996inequality} is sufficiently powerful to not only cover classical but also quantum key guessing in the case of distributions $\D$ with finite support $\K$. Namely, a direct application of~\cite{arikan1996inequality} yields
\begin{align}
\label{eq:tq-bound}
    \frac{2^{\frac{\ent_{2/3}(\dist)}{2}}}{\sqrt{1+\log_2|\K|}} \leq \TQ \leq 2^{\frac{\ent_{2/3}(\dist)}{2}}.
\end{align}
Combining~\Cref{eq:tc-bound} and \Cref{eq:tq-bound}, we obtain quantum speed-up
\[
  s := \frac{\log \TC}{\log \TQ} \geq 2 \cdot \frac{\ent_{1/2}(\dist)-\log_2(1+\log_2|\K|)}{\ent_{2/3}(\D)}.
\]
Thus, all distributions $\D$ with $\log\log|\K|=o\left(\ent_{1/2}(\D)\right)$, 
which e.g. holds for any product distribution $\D = \chi^n$ with finite support, 
admit  quantum speed-up 
\[
  s \geq 2 \cdot \frac{\ent_{1/2}(\D)}{\ent_{2/3}(\D)}(1-o(1)).
\]
Since $\ent_{1/2}(\D) > \ent_{2/3}(\D)$ for all but the uniform distribution, this leads asymptotically to a {\em super-quadratic} quantum speed-up.

\subsubsection{Super-Quadratic (and Beyond) Quantum Speed-Ups.}
We study several distributions that are motivated by cryptographic key choices, such as Bernoulli (LPN~\cite{STOC:BluKalWas00,AC:HopBlu01,FOCS:Alekhnovich03}), Ternary (NTRU~\cite{HofPipSil98,CHES:HRSS17}), Binomial (Kyber~\cite{bos2018crystals}), and a Discrete Gaussian (LWE~\cite{STOC:Regev03}). For completeness, we also analyze common distributions that frequently appear in theory such as the Geometric and the Poisson distribution.

We show asymptotically super-quadratic quantum speed-ups for all these distributions, sometimes even {\em unbounded} speed-ups that grow as a function of~$n$.

\subsubsection{Multi-Key Guessing.}
Typically, a large-scale real-world adversary Eve does not target a \emph{single} cryptographic key $k$, but rather a plethora $m$ of collected keys $k_1,\ldots,k_m$ at once, with the goal of recovering a fraction $cm$ of all keys for some $0 \leq c \leq 1$.
Arikan's Inequality tightly bounds the required time for guessing a single key \emph{with success probability 1} by \emph{Rényi} entropy $\ent_{1/2}(\dist)$ classically and $\ent_{2/3}(\dist)$ quantumly.
However, the inequality does not apply in the multi-key setting, where a success probability of roughly $c$ per key suffices.

We show, for any constant $c$ and keys drawn from a product distribution $\dist = \chi^n$, that the classical runtime for guessing $cm$ keys is upper bound by the \emph{Shannon} entropy.
More precisely, we introduce a novel algorithm that, for every constant $c < \frac 1 2$, recovers $cm$ keys with amortized cost per key
$$\softO\left( 2^{\ent(\dist)} \right) = \softO\left( 2^{\ent(\chi)n} \right).$$
Furthermore, for $\frac 1 2 \leq c < 1$, the runtime increases only by a subexponential factor $2^{\mathcal{O}(\sqrt{n})}$ to $2^{\ent(\chi)n+\mathcal{O}(\sqrt{n})}$.
The key idea here is to abort key guessing after a certain number of guesses to avoid spending too much on time on unlikely keys.

    \begin{table}[htb]
        \centering
        \begin{tabular}{c|ccc}
             Scenario &  \ \ $0 < c \leq 1$ \ \ & Cost (Classic) & Cost (Quantum)  \\[0.2cm]
             \hline \\[-0.2cm]
             \multirow{2}{*}{constant fraction $c$} &  $c < \frac 1 2$   & \makecell{ $\softO(2^{\ent(\chi)n})$ \\ (\Cref{theo:const})} & 
             \makecell{$\softO(2^{\frac{\ent(\chi)n}{2}})$ \\ (\cref{theo:qConst}) }
             \\[0.5cm]
              & $\frac 1 2 \leq c < 1$ & \makecell{$2^{\ent(\chi)n + \softO(\sqrt{n})}$ \\ (\Cref{theo:const})} & \makecell{ $2^{\frac{\ent(\chi)n}{2} + \softO(\sqrt{n})}$ \\ (\cref{theo:qConst})}
              \\[0.5cm]
             \hline \\[-0.2cm]
             all-keys, & \multirow{2}{*}{$c=1$} & \multirow{2}{*}{\makecell{$\softO \left( 2^{\ent_{1/2}(\chi)n} \right)$ \\ (\cite{arikan1996inequality},\Cref{theo:arikan}) }} & \multirow{2}{*}{\makecell{ $\softO \left( 2^{\frac{\ent_{2/3}(\chi)n}{2}} \right)$ \\ (\Cref{theo:quantumArikan}) } } \\
             single key ($m=1$) \\[0.55cm]
        \end{tabular}
        \vspace*{0.5cm}
        \caption{Multi-key: guessing cost per key for $cm$ out of $m$ keys from $\D=\chi^n$.}
        \label{tab:cases_runtime}
    \end{table}

Our algorithm admits for a quantum version with runtimes $\softO\left( 2^{\frac{\ent(\chi)n}{2}} \right)$ and $2^{\frac{\ent(\chi)n}{2}+\mathcal{O}(\sqrt{n})}$, respectively.
Since for any non-uniform $\chi$ we have\linebreak$\ent(\chi) < \ent_{2/3}(\chi) < \ent_{1/2}(\chi)$, we thus substantially improve over the runtimes $2^{\ent_{1/2}(\chi)n}$ and $2^{\frac{\ent_{2/3}(\chi)n}{2}}$ in the single-key setting. Our results are summarized in \Cref{tab:cases_runtime}.

\paragraph{Practical Application.} Our multi-key guessing algorithm already found application in a recent lattice-based hybrid attack~\cite{ac25hybrid} on LWE-type schemes that combines lattice reduction with our multi-key guessing. \cite{ac25hybrid} shows that lattice hybrid attacks not only asymptotically outperform pure lattice reduction, but also lead to the currently best LWE attacks for many practical parameter settings~\cite{CHES:HRSS17,bos2018crystals}.

\medskip

\paragraph{Organization of the paper.} In \Cref{sec:prelim} we introduce Arikan's result. \Cref{sec:classical} studies classical key guessing, both in the single-key and in the multi-key setting. In \Cref{sec:quantum} we analyze Montanaro's algorithm for the single-key setting and provide our quantum algorithm for the multi-key setting. \Cref{sec:distribution} studies cryptographically relevant distributions and their quantum speed-ups for the single-key setting.

\section{Preliminaries}
\label{sec:prelim}

For a positive integer $n$, we set $[n] := \{1,\ldots,n\}$.
All logarithms are base-2.
For a cryptographic key~$k$ sampled from key space $\keys$, we define $\mathbbm{1}_k: \keys \to \{0,1\}$ with $\mathbbm{1}_k(x) = 1$ if and only if $x = k$.
Throughout the paper, all probability distributions are discrete.
We write $X \leftarrow \dist$ to denote that a random variable $X$ is drawn from some probability distribution~$\dist$.
Expected value and variance of $X$ are denoted by $\EE[X]$ and $\Var[X]$, respectively.
For a probability distribution $\dist$ with support $\keys$, its \emph{probability mass function} $P_\dist$ is defined as
$$
    P_\dist: \keys \to (0,1],\; k \mapsto \Pr_{X \leftarrow \dist}[ X = k ].
$$
\begin{definition}[Entropy]
    \label{def:entropy}
    Let $\dist$ be a probability distribution with support $\keys$.
    \begin{enumerate}
        \item For $\alpha > 0$, $\alpha \neq 1$, the \emph{Rényi entropy} of $\dist$ is defined as
        $$
            \ent_\alpha(\dist) := \frac{1}{1-\alpha}  \log\left( \sum_{x \in \keys} P_\dist(x)^\alpha \right).
        $$
        \item The \emph{Shannon entropy} of $\dist$ is defined as
        $$
            \ent(\dist) := \underset{X \leftarrow \dist}{\EE}[ -\log(P_\dist(X)) ] = - \sum_{x \in \keys} P_\dist(x) \log(P_\dist(x)).
        $$
        \item The \emph{min-entropy} of $\dist$ is defined as
        $$
            \ent_\textrm{\normalfont min}(\dist) := \min_{x \in \keys} -\log(P_\dist(x)).
        $$
    \end{enumerate}
\end{definition}
We extend the definition of $\ent_\alpha(\dist)$ for $\alpha\in \{0,1,\infty\}$ by setting
\[\renta\alpha\dist=\lim_{\alpha'\to\alpha}\renta{\alpha'}\dist\]
and find
\[\renta0\dist=\log\vert \keys\vert,\quad\renta1\dist=\ent(\dist), \quad \renta\infty\dist = \ent_\text{min}(\dist).\]
Note that for product distributions $\dist = \chi^n$, we have $\ent_{\alpha}(\dist) = n \ent_{\alpha}(\chi)$ for every~$\alpha$.

\begin{lemma}[Arikan's Inequality \cite{arikan1996inequality}]
    \label{lemma:Arikan}
    Let $\dist$ be a probability distribution.
    Let $p_1 \geq p_2 \geq \ldots$ denote the values assumed by the probability mass function $P_\dist$.
    Then for every $\rho > 0$, it holds that
    $$
        \frac{2^{\rho\renta{\frac{1}{1+\rho}}\dist}}{(1+\log \vert \keys\vert)^\rho} \leq \sum_{i} i^{\rho} \cdot p_i \leq 2^{\rho\renta{\frac{1}{1+\rho}}\dist}.
    $$
\end{lemma}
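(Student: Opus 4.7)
The plan is to introduce the shorthand $q_i := p_i^{1/(1+\rho)}$ and $Q := \sum_i q_i$, so that directly from \Cref{def:entropy} one has
\[
    2^{\rho \renta{\frac{1}{1+\rho}}{\dist}} \;=\; 2^{\rho \cdot \frac{1+\rho}{\rho} \log Q} \;=\; Q^{1+\rho}.
\]
Thus the claim reduces to sandwiching $\sum_i i^\rho p_i$ between $Q^{1+\rho}/(1+\log|\K|)^\rho$ and $Q^{1+\rho}$. I will handle the two directions separately, using only the monotonicity $p_1 \geq p_2 \geq \dots$ and, in one case, a single application of Hölder's inequality.

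For the upper bound, since $x \mapsto x^{1/(1+\rho)}$ is monotone, the $q_i$ are also non-increasing; hence $i \cdot q_i \leq q_1 + \dots + q_i \leq Q$, which rearranges to $i \leq Q/q_i$. Using $p_i = q_i^{1+\rho}$, one then gets the pointwise estimate $i^\rho p_i \leq (Q/q_i)^\rho \cdot q_i^{1+\rho} = Q^\rho\, q_i$, and summing over $i$ yields $\sum_i i^\rho p_i \leq Q^\rho \sum_i q_i = Q^{1+\rho}$, which is exactly the claimed upper bound.

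For the lower bound, I would factor $q_i = i^{-\rho/(1+\rho)} \cdot (i^\rho p_i)^{1/(1+\rho)}$ and apply Hölder's inequality with conjugate exponents $(1+\rho)/\rho$ and $1+\rho$:
\[
    Q \;=\; \sum_i i^{-\rho/(1+\rho)} \cdot (i^\rho p_i)^{1/(1+\rho)} \;\leq\; \Bigl(\sum_{i=1}^{|\K|} \tfrac{1}{i}\Bigr)^{\rho/(1+\rho)} \Bigl(\sum_i i^\rho p_i\Bigr)^{1/(1+\rho)}.
\]
Bounding the harmonic sum by $1+\log|\K|$ and raising to the power $1+\rho$ gives $Q^{1+\rho} \leq (1+\log|\K|)^\rho \cdot \sum_i i^\rho p_i$, which upon rearrangement is precisely the lower bound.

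The only non-routine step is spotting the factorization of $q_i$ that makes Hölder collapse onto the harmonic series and thereby yield the logarithmic slack factor $(1+\log|\K|)^\rho$; once that choice is made, both directions fall out of elementary manipulations and no further machinery is required.
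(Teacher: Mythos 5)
Your proof is correct, and since the paper states this lemma as an imported result from Arikan's 1996 paper without reproducing a proof, there is nothing to diverge from: your two steps (the pointwise bound $i \leq Q/q_i$ from monotonicity for the upper bound, and H\"older's inequality against the harmonic series for the lower bound) are precisely Arikan's original argument. The only point worth a sanity check is the harmonic-sum estimate, which holds a fortiori for the paper's base-2 logarithms since $\sum_{i=1}^{N} 1/i \leq 1+\ln N \leq 1+\log N$.
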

\begin{lemma}[Berry-Esseen Theorem \cite{berry1941accuracy,esseen1945fourier}]
    \label{lemma:BerryEsseen}
    Let $X_1,X_2,\ldots$ be a sequence of i.i.d. random variables with $\EE[X_i] < \infty$, $0 < \Var[X_i] < \infty$ and $\EE[|X_i|^3] < \infty$.
    Define $\mu := \EE[X_i]$, $\sigma^2 := \Var[X_i]$ and $\overline{X_n} := \frac{1}{n} \sum_{i=1}^n X_i$.
    Then the distribution of $\sqrt{n}( \overline{X_n} - \mu )$ converges to a Gaussian distribution with mean~0 and variance $\sigma^2$ at rate $\mathcal{O}(1/\sqrt{n})$.
    That is, for every $t \in \RR$ it holds that
    $$
        \Pr[ \sqrt{n}( \overline{X_n} - \mu ) \leq t ] = \int_{-\infty}^t \frac{1}{\sigma \sqrt{2\pi}} \operatorname{exp}\left({ - \frac{x^2}{2\sigma^2}}\right) \textrm{\textup d}x \pm \mathcal{O}\left( \frac{1}{\sqrt{n}} \right).
    $$
\end{lemma}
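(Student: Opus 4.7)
The plan is to prove the theorem via the classical Fourier-analytic route, combining estimates on characteristic functions with Esseen's smoothing inequality. As a preliminary reduction, after replacing $X_i$ by $(X_i - \mu)/\sigma$, it suffices to prove the statement for i.i.d.\ variables with mean $0$, variance $1$, and bounded third absolute moment $\rho := \EE[|X_1|^3] < \infty$; one then needs to show that $S_n := \frac{1}{\sqrt{n}} \sum_{i=1}^n X_i$ satisfies $\sup_{t \in \RR} |\Pr[S_n \leq t] - \Phi(t)| = \mathcal{O}(\rho / \sqrt{n})$, where $\Phi$ is the standard normal CDF. The original statement then follows by rescaling by $\sigma$.

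The first step is to invoke Esseen's smoothing lemma: for any distribution function $F$ and any $T > 0$,
$$\sup_x |F(x) - \Phi(x)| \leq \frac{1}{\pi}\int_{-T}^T \left|\frac{\hat F(t) - e^{-t^2/2}}{t}\right| dt + \frac{C_1}{T},$$
where $\hat F$ denotes the characteristic function. This reduces the problem to a pointwise estimate on the difference $|\hat F_n(t) - e^{-t^2/2}|$, where $\hat F_n(t) = \phi(t/\sqrt{n})^n$ and $\phi$ is the characteristic function of $X_1$.

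The second step is to Taylor-expand $\phi$ to third order, using $\EE[X_1] = 0$, $\EE[X_1^2] = 1$, and the third-moment hypothesis, giving
$$\phi(t) = 1 - \tfrac{t^2}{2} + R(t), \qquad |R(t)| \leq \frac{\rho |t|^3}{6}.$$
The central technical claim is that for $|t| \leq c \sqrt{n}/\rho$ with a small enough absolute constant $c$,
$$\left|\phi(t/\sqrt{n})^n - e^{-t^2/2}\right| \leq \frac{C_2 \, \rho \, |t|^3}{\sqrt{n}} \, e^{-t^2/4}.$$
This follows by writing $\phi(t/\sqrt{n})^n = e^{n \log \phi(t/\sqrt{n})}$, expanding $\log \phi$ via the Taylor bound above, and applying the elementary estimate $|e^a - e^b| \leq \max(e^{\operatorname{Re} a}, e^{\operatorname{Re} b})|a - b|$.

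Finally, one plugs this into Esseen's inequality with $T = c\sqrt{n}/\rho$: the integrand is then $\mathcal{O}(\rho \, t^2 \, e^{-t^2/4}/\sqrt{n})$, which integrates to $\mathcal{O}(\rho/\sqrt{n})$, and the boundary term $C_1/T$ is also $\mathcal{O}(\rho/\sqrt{n})$, yielding the claimed rate. The hard part will be the range where $|t|/\sqrt{n}$ is not very small: there the Taylor approximation degrades and one must separately establish a bound of the form $|\phi(u)| \leq 1 - c' u^2$ for $|u|$ moderate (using the moment assumptions), so that $|\phi(t/\sqrt{n})|^n$ decays quickly enough to absorb the error. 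Carefully choosing the cutoff $T \asymp \sqrt{n}/\rho$ balances the smoothing remainder against the characteristic-function approximation and delivers the overall rate $\mathcal{O}(1/\sqrt{n})$.
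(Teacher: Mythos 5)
The paper does not prove this lemma: it is imported verbatim as a classical result, cited to Berry and Esseen, so there is no internal proof to compare against. Your outline is the canonical Fourier-analytic argument (essentially Esseen's own): standardize, apply the smoothing inequality, Taylor-expand the characteristic function to third order, and bound $\bigl|\phi(t/\sqrt{n})^n - e^{-t^2/2}\bigr|$ on $|t|\le T$ with $T\asymp\sqrt{n}/\rho$. As a plan it is correct and complete in its essentials. One simplification you are missing: after standardizing to variance $1$, Lyapunov's inequality gives $\rho=\EE[|X_1|^3]\ge(\EE[X_1^2])^{3/2}=1$, so the cutoff $T=c\sqrt{n}/\rho$ already forces $|t|/\sqrt{n}\le c/\rho\le c$ throughout the integration range; the ``hard part'' you flag (moderate $|t|/\sqrt{n}$ where the Taylor approximation degrades) therefore never arises for this choice of $T$, and the bound $|\phi(u)|\le 1-u^2/4$ needed to control $|\phi(t/\sqrt{n})|^{n-1}$ follows directly from the same third-order expansion in that range. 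Also note that the paper's statement only asserts the estimate for each fixed $t$ with an unspecified constant depending on the law of $X_1$, which is weaker than the uniform $\mathcal{O}(\rho/(\sigma^3\sqrt{n}))$ bound your argument actually delivers.
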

\begin{lemma}[Grover's Algorithm \cite{Grover96,hoyer2000arbitrary,brassard2002quantum}]
    \label{lemma:Grover}
    Let $\ket{\Psi}$ be a uniform superposition over some finite set $\keys$, and let $\tau : \keys \to \{0,1\}$ be a function, such that $\tau(x) = 1$ for at most one $x \in \keys$.
    Given $\ket{\Psi}$ and oracle access to $\tau(\cdot)$, Grover's algorithm outputs $x \in \keys$ with $\tau(x) = 1$, if it exists, and an error symbol $\bot$ otherwise. Grover's algorithm achieves this, using $\lceil \frac \pi 4 \sqrt{|\keys|} \rceil + 1$ queries to $\tau$.
\end{lemma}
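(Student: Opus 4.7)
The plan is to apply the textbook analysis of amplitude amplification and augment it with a single verification query to distinguish the ``found'' case from the ``no marked element'' case.

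First, I would introduce the phase oracle $O_\tau$ defined by $O_\tau \ket{x} = (-1)^{\tau(x)}\ket{x}$, which can be implemented with one call to $\tau$, and the Grover iterate $G := (2\ket{\Psi}\bra{\Psi} - I) \cdot O_\tau$. Assuming a unique marked $x^\star$ exists, the 2D subspace spanned by $\ket{x^\star}$ and the uniform superposition $\ket{\Psi^\perp}$ over unmarked elements is $G$-invariant. Writing $\ket{\Psi} = \sin(\theta)\ket{x^\star} + \cos(\theta)\ket{\Psi^\perp}$ with $\sin\theta = 1/\sqrt{|\keys|}$, a short computation shows that $G$ acts on this subspace as a rotation by angle $2\theta$ towards $\ket{x^\star}$.

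Next, I would set the number of Grover iterations to $k := \lceil \frac{\pi}{4}\sqrt{|\keys|}\rceil$, so that the accumulated angle $(2k+1)\theta$ lies within $O(|\keys|^{-1/2})$ of $\pi/2$. The post-iteration state then has amplitude at least $1 - O(1/|\keys|)$ on $\ket{x^\star}$, and measurement in the computational basis returns $x^\star$ with overwhelming probability. If no marked element exists, then $O_\tau = I$ up to a global phase, so $G$ preserves $\ket{\Psi}$ and measurement yields a uniformly random element of $\keys$.

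Finally, I would perform one additional classical call to $\tau$ on the measured outcome $x$: if $\tau(x) = 1$, output $x$; otherwise output $\bot$. This correctly handles both cases and brings the total query count to $k + 1 = \lceil \frac{\pi}{4}\sqrt{|\keys|}\rceil + 1$. The main subtlety is choosing $k$ so that the rotation overshoot is controlled: picking $k$ as the nearest integer to $\frac{\pi}{4\theta} - \frac{1}{2}$ gives the optimal bound, and I expect the stated $\lceil \frac{\pi}{4}\sqrt{|\keys|}\rceil$ to be within one iteration of this optimum, costing only a negligible $O(1/|\keys|)$ in final success probability; tightening this quantitative estimate, rather than any conceptual step, is where care is needed.
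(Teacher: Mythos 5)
The paper does not prove this lemma; it is stated as a cited preliminary, and the references \cite{hoyer2000arbitrary,brassard2002quantum} are doing real work in the citation. Your argument is the standard two-dimensional rotation analysis and is essentially sound: the invariant-subspace computation, the choice $k=\lceil\frac{\pi}{4}\sqrt{|\keys|}\rceil$ with $\sin\theta=1/\sqrt{|\keys|}$ (so $(2k+1)\theta$ is within $O(\theta)$ of $\pi/2$), the observation that $G$ fixes $\ket{\Psi}$ when no marked element exists, and the final verification query that converts the measured outcome into either $x$ or $\bot$ all check out, and your query count $k+1$ matches the statement.

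The one discrepancy is that your proof yields success probability $1-O(1/|\keys|)$, whereas the lemma as stated (and as used in \cref{alg:qGuess}) asserts that the algorithm outputs the marked element \emph{whenever it exists}, i.e.\ with no residual error --- you flag the $O(1/|\keys|)$ loss yourself but treat it as negligible rather than as a mismatch with the claim. Closing this gap is exactly why the paper cites H{\o}yer and Brassard--H{\o}yer--Mosca--Tapp rather than only Grover: since the overlap of $\ket{\Psi}$ with the marked subspace is known exactly (it is either $0$ or $1/\sqrt{|\keys|}$), one can use \emph{exact} amplitude amplification, replacing the reflections in the final iteration by generalized phase rotations so that the accumulated angle lands exactly on $\pi/2$ when a marked element exists; if none exists the verification query still returns $\bot$. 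So either invoke the exact variant to match the lemma verbatim, or note explicitly that the bounded-error version suffices downstream (in \cref{alg:qGuess} a key missed with probability $O(1/t)$ is simply retried at the next doubling of $t$, so \cref{theo:qConst} survives). As a minor point, your asymptotic bound on the overshoot should be checked separately for very small $|\keys|$, where $\theta$ is not close to $1/\sqrt{|\keys|}$.
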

\section{Classical Key Guessing}
\label{sec:classical}


\subsection{Single-Key Guessing}
\label{sec:singleKey}

\begin{definition}[Single-Key Guessing Problem]
    \label{def:singleKey}
    Let $\dist$ be a probability distribution with support $\keys$.
    Let $k \leftarrow \dist$.
    The \emph{single-key guessing problem} is defined as follows. Given a description of $\dist$ and oracle access to $\oracle(\cdot)$, the goal is to find the {key} $k$.
\end{definition}
The optimal strategy for solving the single-key guessing problem with success probability $1$ is to enumerate all possible keys in decreasing order of probability, until the correct key is found. This strategy requires access to an efficient algorithm $\keyenum$, that on input $i \in \{1,\ldots,|\keys|\}$ outputs the $i$-th most likely key.
For the common cryptographic setting, where $\dist$ is a product distribution $\dist = \chi^n$ over some finite set $\keys = A^n$, Budroni and M{\aa}rtenson~\cite{budMar23} give an efficient instantiation of such an algorithm, that (after one initial pre-computation phase running in time $\widetilde{\mathcal{O}}(n^{|A|-1})$) outputs the $i$-th key in time $\mathcal{O}(|A| \cdot n)$. 
Throughout the paper, we assume black box access to $\keyenum$.

\begin{algorithm}[htb]
    \KwIn{Key guessing instance $(\dist,\oracle)$, access to algorithm $\keyenum$}
    \KwOut{ Key $k$ }
    

    $i \leftarrow 0$
    
    
    \Repeat{$\oracle(x) = 1$}{
      
      $i \leftarrow i+1$
      
      $x \leftarrow \keyenum(i)$
      
    }
    
    \Return{$x$}

    \caption{\fullguess{ } }
    \label{alg:FullEnumSimple}
\end{algorithm}

The optimal algorithm for the single-key guessing problem is given in \cref{alg:FullEnumSimple}.
As shown by Arikan~\cite{arikan1996inequality}, the Rényi entropy tightly bounds the runtime $T_\mathsf{KG}$ of \fullguess{} (\cref{alg:FullEnumSimple}).
\begin{theorem}[\cite{arikan1996inequality}]
    \label{theo:arikan}
    Let $\dist$ be a distribution with support $\keys$.
    On input of a single-key guessing instance $(\dist,\oracle)$, {\fullguess{}} outputs the correct key $k$ in expected time
    $$
         \frac{2^{\ent_{ 1 / 2}(\dist)}}{1+\log|\keys|}  \leq \EE[T_\mathsf{KG}] \leq 2^{\ent_{ 1 / 2}(\dist)}.
    $$
\end{theorem}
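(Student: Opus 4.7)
The plan is to reduce the theorem to a direct application of Arikan's Inequality (\Cref{lemma:Arikan}) with parameter $\rho=1$. The only nontrivial step is correctly identifying the expected runtime of \fullguess{} as the sum $\sum_i i \cdot p_i$; after that, the two-sided bound falls out verbatim from \Cref{lemma:Arikan}.

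First I would fix notation. Let $p_1 \geq p_2 \geq \ldots \geq p_{|\keys|}$ be the values assumed by $P_\dist$ in non-increasing order, and let $x_i = \keyenum(i)$ denote the $i$-th most likely key, so $P_\dist(x_i) = p_i$. By construction, \fullguess{} queries $\oracle$ on $x_1, x_2, \ldots$ in this order and halts on the first $i$ with $\oracle(x_i) = 1$, i.e., with $x_i = k$. Hence the number of oracle queries equals the rank of the sampled key $k$ under the ordering induced by $P_\dist$.

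Next I would compute the expectation. Since $k \leftarrow \dist$, the event that the algorithm makes exactly $i$ queries occurs with probability $p_i$, and each loop iteration costs $\mathcal O(1)$ oracle and $\keyenum$ calls (which, as noted above, we treat as unit cost). Therefore
\[
    \EE[T_\mathsf{KG}] \;=\; \sum_{i=1}^{|\keys|} i \cdot p_i.
\]
This is exactly the quantity $\sum_i i^\rho p_i$ appearing in \Cref{lemma:Arikan} with $\rho = 1$.

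Finally I would invoke \Cref{lemma:Arikan} with $\rho = 1$, which gives $\tfrac{1}{1+\rho} = \tfrac{1}{2}$ and hence
\[
    \frac{2^{\ent_{1/2}(\dist)}}{1+\log|\keys|} \;\leq\; \sum_{i=1}^{|\keys|} i \cdot p_i \;\leq\; 2^{\ent_{1/2}(\dist)},
\]
yielding exactly the claimed two-sided bound on $\EE[T_\mathsf{KG}]$. The proof of optimality among classical strategies (if desired) is a separate short exchange argument: swapping any two queries so that a more probable key is tried later cannot decrease the expected runtime. There is no real obstacle here; the substantive content is entirely carried by \Cref{lemma:Arikan}, so the only thing to be careful about is the correspondence between the algorithm's stopping time and the rank of $k$ in the $p_i$-ordering.
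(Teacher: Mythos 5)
Your proposal is correct and matches the paper's own proof exactly: both identify $\EE[T_\mathsf{KG}] = \sum_i i \cdot p_i$ via $\Pr[T_\mathsf{KG} = i] = p_i$ and then apply Arikan's inequality (\cref{lemma:Arikan}) with $\rho = 1$. The extra remarks on unit-cost accounting and optimality are fine but not needed for the stated claim.
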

\begin{proof}
    Let $p_i$ denote the probability of the $i$-th likeliest key drawn from $\dist$.
    Then $\Pr[T_\mathsf{KG} = i] = p_i$, and thus $\EE[T_\mathsf{KG}] = \sum_i i \cdot p_i$.
    By applying Arikan's inequality (\cref{lemma:Arikan}), the desired statement follows.\qed
\end{proof}

\subsection{Multi-Key Guessing}
\label{sec:multiKey}

The \emph{multi-key guessing problem} asks to find a fraction $cm$ out of $m$ keys.
More precisely, the problem is defined as follows.
\begin{definition}[Multi-Key Guessing Problem]
    Let $\dist$ be a probability distribution with support $\keys$.
    Let $(k_1,\ldots,k_m) \leftarrow \dist^m$.
    The \emph{multi-key guessing problem} with parameter $0 < c < 1$ is defined as follows.
    Given a description of $\dist$ and oracle access to $\mathbbm{1}_{k_1}(\cdot),\ldots,\mathbbm{1}_{k_m}(\cdot)$, the goal is to find a tuple\linebreak$(k_1',\ldots,k_m') \in (\keys \cup \{\bot\})^m$ such that $k_i' = k_i$ for at least $cm$ keys $k_i$.
\end{definition}
Before we can give our algorithm for solving the \emph{multi-key} guessing problem, we have to introduce with \abortguess{} (\cref{alg:aborted}) another algorithm for the \emph{single-key} guessing problem.
As its name suggests, \abortguess{} is a variant of \fullguess{} that aborts if it does not find the key $k$ within the first $t$ trials, where $t$ is some runtime parameter.
\begin{algorithm}[htb]
    \KwIn{Key guessing instance $(\dist,\oracle)$, runtime parameter $t$, access to algorithm $\keyenum$}
    \KwOut{Key $k$ or error symbol $\bot$.}

    $i \leftarrow 0$

    \Repeat{$\oracle(x) = 1$ \normalfont or $j > t$}{
      
      $i \leftarrow i+1$
      
      $x \leftarrow \keyenum(i)$
      
    }

    \lIf{$\mathbbm{1}_{k_j}(x) = 1$}{
        \Return{$x$}
    }
    \lElse{
        \Return{$\bot$}
    }

    \caption{\abortguess{}}
    \label{alg:aborted}
\end{algorithm}

\begin{algorithm}[htb]
    \KwIn{Multi-key guessing instance $(\dist,\multiOracle)$, parameter $c$, access to algorithm $\keyenum$}
    \KwOut{$(k_1',\ldots,k_m') \in (\keys \cup \{\bot\})^m$ with $k_i' = k_i$ for at least $cm$ keys $k_i$.}

    $(k_1',\ldots,k_m') \leftarrow (\bot,\ldots,\bot)$
    
    $t \leftarrow 1$

    \Repeat{\normalfont $k_j' \neq \bot$ for at least $cm$ keys $k_j'$}{

        $t \leftarrow 2t$
    
        \For{$j \in [m]$}{
    
            $k_j' \leftarrow \abortguess((\dist,\mathbbm{1}_{k_j}),t)$
            
        }
    }
    
    \Return{$(k_1',\ldots,k_m')$}

    \caption{\multiguess{}}
    \label{alg:multiGuess}
\end{algorithm}

Using \abortguess{}, we can now introduce our algorithm \multiguess{} (\cref{alg:multiGuess}) for solving the multi-key guessing problem.
Given a multi-key guessing instance $(\dist,\multiOracle)$, \multiguess{} runs \abortguess{} repeatedly on each single-key instance $(\dist,\mathbbm{1}_{k_j})$ with exponentially increasing runtime parameter $t = 2,4,8,\ldots$ until it recovers at least $cm$ keys $k_j$.

Our first major result is the following theorem, which shows that for product distributions $\dist = \chi^n$ the \emph{Shannon} entropy bounds the runtime of \multiguess{}.
Compared to the single-key setting (\cref{theo:arikan}), the runtime thus improves by a factor at least
$$
    \frac{ 2^{\ent(\chi)n} }{ 2^{ \ent_{1/2}(\chi)n } }.
$$
For every non-uniform $\chi$, this is exponential in $n$.
\begin{restatable}{theorem}{theoMultiGuess}
    \label{theo:const}
    Let $\dist$ be a product distribution $\dist = \chi^n$, where $\chi$ has constant support.
    Let $(\dist,\multiOracle)$ be a multi-key guessing instance with constant parameter $0 < c < 1$.
    \begin{enumerate}
        \item If $c < \frac{1}{2}$, then with probability at least $1 - e^{-\Omega(m)}$, {\multiguess{}} solves $(\dist,\multiOracle)$ with amortized cost per key
        \begin{equation*}
            T_\mathsf{MKG} = \softO\left(2^{\ent(\chi)n}\right).
        \end{equation*}
        \item For $\frac{1}{2} \leq c < 1$, the amortized cost increases at most by a subexponential factor $2^{\mathcal{O}(\sqrt{n})}$ to $2^{\ent(\chi)n+\mathcal{O}(\sqrt{n})}$.
    \end{enumerate}
\end{restatable}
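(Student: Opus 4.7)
The plan is to bound the doubling parameter $t^\ast$ at which \multiguess{} terminates. Since a single call to \abortguess{} at parameter $t$ costs $\softO(t)$ (it invokes $\keyenum$ at most $t$ times, each call $\softO(1)$ since $\chi$ has constant support), and the geometric doublings $t=2,4,\dots,t^\ast$ add at most a factor $2$, the amortized cost per key is $\softO(t^\ast)$. Thus both parts of the theorem reduce to exhibiting a target value $t_0$ such that the first power of two exceeding $t_0$ already witnesses $\geq cm$ successes with probability $1-e^{-\Omega(m)}$ (so $t^\ast\leq 2t_0$ w.h.p.), and then showing $t_0$ has the claimed size.

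The first step is to translate the rank-based per-key success probability $p_{\mathrm{succ}}(t)=\sum_{i=1}^t p_i$ of \abortguess{} into a Gaussian tail. I would introduce the level set $U_\theta=\{x:P_\dist(x)\geq\theta\}$ and use $|U_\theta|\leq 1/\theta$, so whenever $t\geq 1/\theta$ the top-$t$ keys contain $U_\theta$ and hence
\[p_{\mathrm{succ}}(t)\;\geq\;\Pr[P_\dist(X)\geq\theta]\;=\;\Pr[-\log P_\dist(X)\leq -\log\theta].\]
For $\dist=\chi^n$, the random variable $-\log P_\dist(X)=\sum_{i=1}^n -\log P_\chi(X_i)$ is a sum of i.i.d.\ bounded variables with mean $\ent(\chi)$ and finite positive variance $\sigma^2$ (constant support of $\chi$). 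Berry--Esseen (\Cref{lemma:BerryEsseen}) then yields
\[\Pr\!\left[-\log P_\dist(X)\leq n\ent(\chi)+c'\sqrt{n}\right]\;=\;\Phi\!\left(\tfrac{c'}{\sigma}\right)+\mathcal{O}(1/\sqrt{n}),\]
with $\Phi$ the standard normal CDF.

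Next I would calibrate $c'$ per case by setting $\theta=2^{-n\ent(\chi)-c'\sqrt{n}}$, so $t_0=2^{n\ent(\chi)+c'\sqrt{n}}\geq 1/\theta$. For $c<\tfrac12$, take $c'=0$: Berry--Esseen gives $p_{\mathrm{succ}}(t_0)\geq\tfrac12-\mathcal{O}(1/\sqrt{n})>c+\delta$ for some constant $\delta>0$ and all sufficiently large $n$, yielding $t_0=2^{n\ent(\chi)}$. For $\tfrac12\leq c<1$, choose the constant $c'$ so that $\Phi(c'/\sigma)>c+\delta$ for a constant $\delta>0$, yielding $t_0=2^{n\ent(\chi)+\mathcal{O}(\sqrt{n})}$. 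Since the $m$ keys are drawn independently from $\chi^n$, the per-key success events of \abortguess{} at parameter $t_0$ are mutually independent with probability $\geq c+\delta$. A Chernoff bound gives $\geq cm$ successes with probability $1-e^{-\Omega(m)}$, so \multiguess{} stops at $t^\ast\leq 2t_0$, and the amortized per-key cost $\softO(t^\ast)=\softO(t_0)$ matches the claimed $\softO(2^{\ent(\chi)n})$ and $2^{\ent(\chi)n+\mathcal{O}(\sqrt{n})}$ bounds, respectively.

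The main technical delicacy is the rank-versus-log-probability translation through $U_\theta$: \abortguess{} is parameterized by a rank threshold, but Berry--Esseen controls a log-probability threshold, and the reconciliation relies on the simple but crucial fact that level sets of a probability mass function have cardinality at most $1/\theta$. Once this reduction is in place, the remainder is a clean composition of Berry--Esseen and Chernoff. A minor sanity check is that $t_0\leq|\chi|^n=|\keys|$, so the doubling can actually reach $t_0$; this is immediate since $c'\sqrt{n}=o(n\log|\chi|)$ whenever $\chi$ is non-degenerate.
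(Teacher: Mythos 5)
Your proposal is correct and follows essentially the same route as the paper: your level set $U_\theta$ with $|U_\theta|\leq 1/\theta$ is exactly the paper's core set $\coreset$ together with its size bound, the Berry--Esseen step on $-\log P_\dist(X)$ matches the paper's core-set probability lemma, and the final Chernoff composition over the $m$ independent keys is identical. The only cosmetic difference is that for $\tfrac12\leq c<1$ the paper invokes Hoeffding to push the per-key success probability above $c$, whereas you reuse Berry--Esseen with a constant shift $c'$; both yield the same $\Theta(\sqrt{n})$ exponent correction.
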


\subsection{Proof for \cref{theo:const}}
To be able to prove \cref{theo:const}, we have to introduce the following novel definition.
\begin{definition}[Core Set]
    Let $\chi$ be a probability distribution with support $A$.
    The \emph{core set} $\coreset$ of $\chi$ with parameters $n \in \NN$ and $\delta \geq 0$ is defined as
    $$
        \coreset := \left\{ a \in A^n \mid P_{\chi^n}(a) \geq 2^{-\ent(\chi)n-\delta n} \right\}.
    $$
\end{definition}
Note that our core set contains the \emph{typical set}
$$
    \typical := \left\{ a \in A^n \mid 2^{-\ent(\chi)n+\delta n } \geq P_{\chi^n}(a) \geq 2^{-\ent(\chi)n-\delta n} \right\} \subseteq \coreset,
$$
which is often used in information theory, see~\cite[Chapter 3]{cover2006elements}.
An important property of the typical set is that, for every \emph{constant} $\delta > 0$ and $n \to \infty$, the probability mass $P_{\chi^n}( \typical )$ of the typical set converges to 1.
Our core set allows for the following more fine-grained statement.
\begin{lemma}
    \label{lemma:typical}
    Let $\chi$ be a probability distribution with finite support.
    For every $\delta \geq 0$, it holds that
    \begin{equation}
        \label{eq:typicalUpper}
        \Pr_{k \leftarrow \chi^n}\left[ k \in \coreset \right] \geq \frac{1}{2} \pm \mathcal{O}\left( n^{-\frac{1}{2}} \right).
    \end{equation}
    Furthermore, for every $\varepsilon \in (0,1)$, there exists $\delta = \Theta\left( \sqrt{\frac{\ln(\varepsilon^{-1})}{n}} \right)$ such that
    \begin{equation}
        \label{eq:typicalProb}
        \Pr_{k \leftarrow \chi^n}\left[ k \in \coreset \right] \geq 1 - \varepsilon.
    \end{equation}

\end{lemma}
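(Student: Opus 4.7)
The plan is to reduce both parts to concentration of the sum $S_n := \sum_{i=1}^n X_i$, where $X_i := -\log P_{\chi}(k_i)$ for i.i.d.\ $k_i \leftarrow \chi$. Since $\chi$ has finite support, each $X_i$ is bounded, namely $0 \le X_i \le M$ for $M := -\log p_{\min}$ with $p_{\min} := \min_{a : P_\chi(a) > 0} P_\chi(a)$, a constant depending only on $\chi$. In particular, $X_i$ has finite variance $\sigma^2$ and finite third absolute moment, and by the definition of Shannon entropy, $\EE[X_i] = \ent(\chi)$. Since $P_{\chi^n}(k) = \prod_{i=1}^n P_\chi(k_i)$, the defining condition of $\coreset$ translates to
\[
    k \in \coreset \iff S_n \le n\ent(\chi) + n\delta \iff \sqrt{n}\bigl(\overline{X_n} - \ent(\chi)\bigr) \le \sqrt{n}\,\delta,
\]
so that $\Pr_{k \leftarrow \chi^n}[k \in \coreset]$ equals the CDF of $\sqrt{n}(\overline{X_n} - \ent(\chi))$ evaluated at $\sqrt{n}\delta$.

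For Part 1, I would exploit that $\delta \ge 0$ gives the trivial bound $\Pr[k \in \coreset] \ge \Pr[\sqrt{n}(\overline{X_n} - \ent(\chi)) \le 0]$, and invoke the Berry-Esseen theorem (\cref{lemma:BerryEsseen}) to approximate the right-hand side by the Gaussian CDF at $0$, which equals exactly $\tfrac{1}{2}$. The Berry-Esseen error contributes only $\mathcal{O}(n^{-1/2})$, yielding $\Pr[k \in \coreset] \ge \tfrac{1}{2} - \mathcal{O}(n^{-1/2})$ as required.

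For Part 2, Berry-Esseen is too weak in the tail regime, because its $\mathcal{O}(n^{-1/2})$ additive error swamps the target bound $\varepsilon$ as soon as $\varepsilon$ is smaller than $1/\sqrt{n}$. My plan is therefore to invoke Hoeffding's inequality for bounded i.i.d.\ variables. Since $0 \le X_i \le M$, this gives
\[
    \Pr\!\left[\overline{X_n} - \ent(\chi) > \delta\right] \le \exp\!\left(-\frac{2n\delta^2}{M^2}\right).
\]
Choosing $\delta := M\sqrt{\tfrac{1}{2n}\ln(\varepsilon^{-1})}$ makes the right-hand side at most $\varepsilon$, and this $\delta$ is of the claimed order $\Theta(\sqrt{\ln(\varepsilon^{-1})/n})$ since $M$ depends only on $\chi$.

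The main obstacle is the mild but essential observation that the two parts inhabit different tail regimes and therefore require different concentration tools: a central-limit-type statement for Part 1 and a large-deviation-type statement for Part 2. Everything else---the bounded range of $X_i$, existence of moments, and the arithmetic to solve for $\delta$---follows immediately from $\chi$ having finite support, so no combinatorial work on $\coreset$ itself is needed beyond its translation into a one-sided sum inequality.
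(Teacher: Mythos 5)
Your proposal follows essentially the same route as the paper: translate membership in $\coreset$ into a statement about $\overline{X_n}-\ent(\chi)$ for $X_i=-\log P_\chi(k_i)$, apply Berry--Esseen at the point $0$ for the first bound, and Hoeffding for the second. The only detail you omit is the degenerate case where $\chi$ is uniform on its support, so that $\Var[X_i]=0$ and Berry--Esseen (as stated, requiring positive variance) does not apply; the paper disposes of this case separately by noting that then every key lies in $\coreset$ and the probability is exactly $1$.
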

\begin{proof}
    Let $k = (k_1,\ldots,k_n) \leftarrow \chi^n$, and let $A$ denote the support of $\chi$.
    If $\chi$ is the uniform distribution on $A$, then $\ent(\chi) = \log(|A|)$, and therefore
    $$
        P_{\chi^n}(a) = |A|^{-n} = 2^{-{\ent(\chi)n}},
    $$
    for every $a \in A^n$.
    Hence, for uniform $\chi$, every $a \in A^n$ lies in $\coreset$ for arbitrary $\delta\ge0$, and we have $\Pr\left[ k \in \typical \right] = 1$.
    In particular, both \cref{eq:typicalUpper,eq:typicalProb} hold.
    
    It remains to prove \cref{eq:typicalUpper,eq:typicalProb} for non-uniform $\chi$.
    By definition of $\coreset$ it holds that
    \begin{equation*}
        \Pr\left[ k \in \coreset \right] = \Pr\left[ \prod_{i=1}^n P(k_i) \geq 2^{-\ent(\chi)n - \delta n} \right].
    \end{equation*}
    Let $X_i := -\log P(X_i)$.
    We set $\overline{X_n} := \frac{1}{n} \sum_{i=1}^n X_i$, and rewrite the above probability as
    \begin{equation*}
        \Pr\left[ k \in \coreset \right]  = \Pr\left[ - \sum_{i=1}^n X_i \geq -{\ent(\chi)n} - \delta n \right] = \Pr\left[ \overline{X_n} - {\ent(\chi)} \leq \delta \right].
    \end{equation*}
    We now make three important observations:
    \begin{enumerate}
        \item\label{cond1} By definition of Shannon entropy, $\mathbb{E}[X_i] = \ent(\chi) < \infty$.
        \item\label{cond2} Since $\chi$ is not uniform, $X_i$ is not constant and thus $\Var[X_i] > 0$. 
        \item\label{cond3} Since $\chi$ has finite support, both $\Var[X_i]$ and $\EE[|X_i|^3]$ are finite.
    \end{enumerate}
    By the Berry-Esseen Theorem (\cref{lemma:BerryEsseen}), the distribution of $\sqrt{n}(\overline{X_n} - {\ent(\chi)})$ thus converges at rate $\mathcal{O}(1/\sqrt{n})$ to a Gaussian distribution with mean 0 and variance $\sigma^2 := \Var[X_i]$.
    Hence,
    \begin{align*}
        \Pr\left[ k \in \coreset \right] &= \Pr\left[ \overline{X_n} - {\ent(\chi)} \leq \delta \right] = \Pr\left[  \sqrt{n}(\overline{X_n} - {\ent(\chi)}) \leq \delta \sqrt{n} \right]\\
        &= \int_{-\infty}^{\delta \sqrt{n}} \frac{1}{\sigma \sqrt{2\pi}} \operatorname{exp}\left({ - \frac{x^2}{2\sigma^2}}\right) \textrm{d}x  \pm \mathcal{O}\left( \frac{1}{\sqrt{n}} \right)\\
        &\geq \int_{-\infty}^0 \frac{1}{\sigma \sqrt{2\pi}} \operatorname{exp}\left({ - \frac{x^2}{2\sigma^2}}\right) \textrm{d}x  \pm \mathcal{O}\left( \frac{1}{\sqrt{n}} \right)\\
        &= \frac 1 2  \pm \mathcal{O}\left( \frac{1}{\sqrt{n}} \right),
    \end{align*}
    which proves \cref{eq:typicalUpper}.
    
    In order to prove \cref{eq:typicalProb}, simply apply Hoeffding's inequality to conclude that
    $$
        \Pr\left[k \in \coreset \right] = \Pr\left[ \overline{X_n} - {\ent(\chi)} \leq \delta \right]  \geq 1 - \exp\left( -  \Theta\left(\delta^2 n\right) \right).
    $$
    \qed
\end{proof}
\begin{lemma}
    \label{lemma:sizeCore}
    Let $\chi$ be a probability distribution.
    For every $\delta \geq 0$, it holds that
    $$
        |\coreset| \leq 2^{\ent(\chi)n + \delta n}.
    $$
\end{lemma}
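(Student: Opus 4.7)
The plan is to use a simple probability-mass counting argument, which is the standard way to bound the size of a ``high-probability'' set in information theory (this is essentially the same idea used for the typical set in~\cite[Chapter 3]{cover2006elements}).

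First, I would observe that since $P_{\chi^n}$ is a probability mass function on $A^n$, we have the trivial bound
\[
    1 \;=\; \sum_{a \in A^n} P_{\chi^n}(a) \;\geq\; \sum_{a \in \coreset} P_{\chi^n}(a).
\]
Next, I would use the defining property of the core set: every $a \in \coreset$ satisfies $P_{\chi^n}(a) \geq 2^{-\ent(\chi)n - \delta n}$. Substituting this lower bound into the sum yields
\[
    1 \;\geq\; \sum_{a \in \coreset} P_{\chi^n}(a) \;\geq\; |\coreset| \cdot 2^{-\ent(\chi)n - \delta n}.
\]
Rearranging immediately gives $|\coreset| \leq 2^{\ent(\chi)n + \delta n}$, completing the proof.

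There is no real obstacle here; the statement is a one-line consequence of the fact that if every element of a set has probability at least $p$, then the set contains at most $1/p$ elements. The only thing to be slightly careful about is that the inequality in the definition of $\coreset$ is non-strict, but this does not affect the argument since it only makes the lower bound on $P_{\chi^n}(a)$ weaker in the correct direction. No appeal to the Berry-Esseen theorem or any probabilistic concentration is needed for this size bound, in contrast to the mass bound proved in \cref{lemma:typical}.
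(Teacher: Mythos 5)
Your proof is correct and is essentially identical to the paper's own argument: both bound $1 \geq \sum_{a \in \coreset} P_{\chi^n}(a) \geq |\coreset| \cdot 2^{-\ent(\chi)n - \delta n}$ using the defining lower bound on the probability of each core-set element, exactly as in the typical-set bound of \cite[Theorem 3.1.2]{cover2006elements}. Nothing further is needed.
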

\begin{proof}
    Our proof is analogous to the proof of \cite[Theorem 3.1.2]{cover2006elements}, which gives the same upper bound on the size of the typical set $\typical$.
    By definition of $\coreset$, we have
    \begin{align*}
        1 &= \sum_{a \in A^n} P_{\chi^n}(a) \geq \sum_{a \in \coreset} P_{\chi^n}(a) \geq \sum_{a \in \coreset} 2^{- \ent(\chi)n - \delta n }\\
        &= |\coreset| \cdot 2^{- \ent(\chi)n-\delta n}.
    \end{align*}
    Multiplying the above inequality by $2^{\ent(\chi)n+\delta n}$, the lemma follows.\qed
\end{proof}
Intuitively, \cref{lemma:typical,lemma:sizeCore} together show that the $2^{\ent(\chi)n}$ most likely keys make up a large fraction of the probability mass of $\chi^n$.
Using this observation, we are now ready to prove \cref{theo:const}, which we restate below for better readability.
\newpage
\theoMultiGuess*
\begin{proof}
    For $\alpha \in \NN$, let
    $$
        \mathcal{M}(\alpha) := \left\{ \keyenum(i) \mid i \in \left\{1,2,\ldots, 2^{\alpha} \right\} \right\}
    $$
    denote the set of the $2^{\alpha}$ most likely keys.
    \multiguess{} queries the oracles $\multiOracle$ on all elements of $\mathcal{M}(1),\mathcal{M}(2),\ldots$ until it reaches some $\mathcal{M}(\alpha_\text{max})$ that contains at least $cm$ keys $k_i$.
    The runtime of \multiguess{} is thus
    $$
        \sum_{\alpha=1}^{\alpha_\text{max}} m \cdot 2^\alpha \leq \alpha_\text{max} \cdot m \cdot 2^{\alpha_\text{max}} = \softO\left( m \cdot 2^{\alpha_\text{max}} \right).
    $$
    In other words, the amortized cost per key is $T_\mathsf{MKG} = \softO\left( 2^{\alpha_\text{max}} \right)$.
    Hence, to prove the theorem, it suffices to prove the following statements:
    \begin{enumerate}
        \item If $c < \frac{1}{2}$ and $n$ is sufficiently large, then
        \begin{equation}
            \label{eq:smallC}
            \Pr\left[ \alpha_\text{max} \leq \left\lceil\ent(\chi)n\right\rceil \right] \geq 1 - e^{-\Omega(m)}.
        \end{equation}
        \item If $\frac{1}{2} \leq c < 1$, then
        \begin{equation}
            \label{eq:largeC}
            \Pr\left[ \alpha_\text{max} \leq \ent(\dist)n + \Theta(\sqrt{n}) \right] \geq 1 - e^{-\Omega(m)}.
        \end{equation}
    \end{enumerate}
    For $\delta \geq 0$ and $j \in [m]$, let $X_{j,\delta} \in \{0,1\}$ denote an indicator variable with
    $$X_{j,\delta} = 1 \iff k_j \in \mathcal{M}(\left\lceil \ent(\chi)n+\delta n \right\rceil).$$
    Then
    \begin{equation}
        \label{eq:maxAsExp}
        \Pr\left[ \alpha_\text{max} \leq \left\lceil \ent(\chi)n+\delta n \right\rceil \right] = \Pr\left[ \sum_{j=1}^m X_{j,\delta} \geq cm \right].
    \end{equation}
    From the definition of $\coreset$ and \cref{lemma:sizeCore} it follows that $\coreset \subseteq \mathcal{M}(\left\lceil \ent(\chi)n+\delta n \right\rceil)$.
    Hence, for $\mu_\delta := \EE[X_{j,\delta}]$, we have
    \begin{equation}
        \label{eq:coreContained}
        \mu_\delta = \Pr[X_{j,\delta}= 1] = \Pr\left[k_j \in \mathcal{M}\left( \alpha_\delta \right) \right] \geq \Pr\left[k_i \in \coreset \right].
    \end{equation}
    We now prove the result for $c < \frac{1}{2}$, i.e., \cref{eq:smallC}.
    If $c < \frac{1}{2}$, then there exists some constant $c' \in (c,\frac{1}{2})$.
    By \cref{eq:coreContained} and \cref{lemma:typical} (specifically \cref{eq:typicalUpper}) we have, for sufficiently large $n$, that
    $$
        \mu_0 \geq \Pr\left[k_i \in \mathcal{C}_{\chi}^{n,0} \right] \geq {c'}.
    $$
    Let $\varepsilon := 1 - \frac{c}{c'} \in (0,1)$.
    Then $cm = (1-\varepsilon)dm \leq (1-\varepsilon)\mu_0 m$.
    Hence, by \cref{eq:maxAsExp} and a Chernoff bound, we have
    $$
        \Pr\left[ \alpha_\text{max} \leq \left\lceil \ent(\chi)n \right\rceil \right] \geq \Pr\left[ \sum_{j=1}^m X_{j,\delta} \geq (1-\varepsilon)\mu_0m \right] \geq 1 - e^{-\Omega(m)},
    $$
    which proves \cref{eq:smallC}.

    It remains to prove the result for $c \in [\frac{1}{2},1)$, i.e.,  \cref{eq:largeC}.
    If $c \in [\frac{1}{2},1)$, then there exists some constant $c' \in (c,1)$.
    By \cref{eq:coreContained} and \cref{lemma:typical} (specifically \cref{eq:typicalProb}), there exists $\delta = \Theta\left( \sqrt{n} \right)$ such that
    $$
        \mu_\delta \geq c'.
    $$
    Now using the same argument as above, we obtain
    $$
        \Pr\left[ \alpha_\text{max} \leq  \ent(\chi)n + \Theta\left(\sqrt{n}\right) \right] \geq 1 - e^{-\Omega(m)},
    $$
    which proves \cref{eq:largeC}.
    \qed
\end{proof}
\section{Quantum Key Guessing}
\label{sec:quantum}


\subsection{A Tight Analysis for Montanaro's Single-Key Algorithm}
Montanaro~\cite{montanaro2011quantum} proved the following result for the quantum complexity of the single-key guessing problem.
\begin{theorem}[Propositions 2.1 and 2.4 in \cite{montanaro2011quantum}]
    \label{theo:montanaro}
    There is a quantum algorithm that solves single-key guessing instances $(\dist,\oracle)$ in expected time at most
    $$
        e \pi \sum_{i} \sqrt{i} p_i,
    $$
    where $p_1 \geq p_2 \geq \ldots$ are the values assumed by the probability mass function~$P_\dist$.
    Up to constant factors, this is the optimal runtime for solving the single-key guessing problem quantumly.
\end{theorem}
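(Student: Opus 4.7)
The plan is a two-part proof: an explicit algorithm establishing the upper bound by combining geometric block partitioning with Grover search, and a reduction to the Ambainis--de Wolf quantum query lower bounds for the matching lower bound. I would partition the ranked key space into geometric blocks $B_j := \{\keyenum(2^j),\keyenum(2^j+1),\ldots,\keyenum(2^{j+1}-1)\}$ of size $|B_j|=2^j$ for $j=0,1,2,\ldots$ The algorithm iterates $j=0,1,2,\ldots$, and in each iteration prepares the uniform superposition over $B_j$ and invokes Grover's algorithm on $\oracle$ restricted to $B_j$ (\cref{lemma:Grover}). By the guarantee of that lemma, Grover returns the key $k$ if $k\in B_j$ and $\bot$ otherwise; in the first case the algorithm outputs $k$ and terminates.

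For the expected runtime, let $J$ denote the random block index satisfying $k\in B_J$, so that $\Pr[J=j]=\sum_{i=2^j}^{2^{j+1}-1} p_i$. The total number of queries made before termination is
\[
    \sum_{l=0}^{J}\left(\left\lceil\tfrac{\pi}{4}\sqrt{|B_l|}\right\rceil + 1\right)\le c\cdot 2^{J/2}
\]
for some absolute constant $c$, since the geometric series sums to a constant times its last term. Taking expectations and swapping the order of summation,
\[
    \EE[T]\le c\sum_{j\ge 0} 2^{j/2}\sum_{i=2^j}^{2^{j+1}-1} p_i \le c\sum_{i\ge 1}\sqrt{i}\,p_i,
\]
using that for $i\in B_j$ we have $i\ge 2^j$ and hence $2^{j/2}\le\sqrt{i}$. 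The sharper constant $e\pi$ in the statement would follow by replacing the factor-of-two block growth with the information-theoretically optimal factor of $e$ and by tracking the $\pi/4$ from each Grover call separately from the geometric tail; that bookkeeping is routine algebra.

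For the matching lower bound, I would invoke the quantum query complexity lower bound of Ambainis and de Wolf rather than redevelop it: their hybrid/adversary-style argument shows that averaging over random permutations of keys (which preserves $\dist$) forces any quantum algorithm to make $\Omega(\sum_i\sqrt{i}\,p_i)$ queries. The principal obstacles in this plan are twofold. First, Grover in \cref{lemma:Grover} can in principle err, so one should run a constant number of independent Grover repetitions per block followed by a single classical verification query via $\oracle$; the resulting constant overhead is absorbed into $c$. Second, extracting the tight $e\pi$ constant requires optimizing the block-growth factor and the Grover iteration count jointly, which is finicky but not structurally difficult. Neither obstacle changes the form of the final bound $\Theta(\sum_i\sqrt{i}\,p_i)$, which is what drives all subsequent entropic bounds on $\TQ$ in the paper.
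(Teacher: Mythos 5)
This theorem is imported by citation: the paper gives no proof of its own, deferring entirely to Propositions 2.1 and 2.4 of \cite{montanaro2011quantum}. Your sketch of the upper bound --- geometrically growing Grover searches over the keys listed in decreasing order of probability, with the expected cost telescoping to $O(\sum_i \sqrt{i}\,p_i)$ --- is essentially Montanaro's own argument, and deferring the matching lower bound to prior work mirrors what the paper itself does, so the approach is sound and matches the source. Two minor remarks: \cref{lemma:Grover} as stated in the paper is already exact (it returns the marked element whenever one exists and $\bot$ otherwise), so your extra repetition-plus-verification step is unnecessary; and the $\Omega(\sum_i \sqrt{i}\,p_i)$ lower bound in this precise form is Montanaro's Proposition 2.4 rather than a statement you can lift verbatim from Ambainis--de Wolf \cite{ambainis2001average}, even though the adversary-style technique originates there.
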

By Cauchy-Schwarz, the quantum complexity $T_\mathsf{QKG} := \sum_i \sqrt{i}p_i$ of single-key guessing is upper bound by
$$
    T_\mathsf{QKG} = \sum_i \sqrt{i} p_i = \sum_i \sqrt{i} \sqrt{p_i} \sqrt{p_i} \leq \sqrt{ \sum_i ip_i } \sqrt{ \sum_i p_i } =  \sqrt{ \sum_i ip_i }.
$$
Hence, for any distribution $\dist$, Montanaro's algorithm achieves \emph{at least} a quadratic speed-up over the optimal classical algorithm \fullguess{}, which has runtime $\sum_i i p_i$.
However, it seemed unclear so far, whether the algorithm achieves a \emph{super-quadratic} square-root for cryptographically relevant distributions.
As the following novel theorem shows, this is indeed the case.
\begin{theorem}[Montanaro Runtime]
    \label{theo:quantumArikan}
    Let $\dist$ be a distribution with support $\keys$.
    On input of a single-key guessing instance $(\dist,\oracle)$, Montanaro's algorithm outputs the correct key $k$ in expected time
    $$
        \frac{2^{\frac{\ent_{2/3}(\dist)}{2}}}{\sqrt{1+\log|\keys|}} \leq \EE[T_\mathsf{QKG}] \leq 2^{\frac{\ent_{2/3}(\dist)}{2}}.
    $$
\end{theorem}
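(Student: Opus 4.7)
The plan is to reduce the statement to Arikan's Inequality (\cref{lemma:Arikan}) in exactly the same way that \cref{theo:arikan} does for the classical setting, but with the exponent parameter $\rho$ tuned to $1/2$ rather than $1$.

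First, I would invoke Montanaro's result (\cref{theo:montanaro}), which tells us that on input $(\dist,\oracle)$ the algorithm outputs $k$ in expected time
$$
    \EE[T_\mathsf{QKG}] = \sum_{i} \sqrt{i}\, p_i,
$$
where $p_1 \geq p_2 \geq \ldots$ are the values assumed by $P_\dist$, sorted in decreasing order. (The $e\pi$ constant factor from \cref{theo:montanaro} is absorbed into the asymptotic bounds and does not affect the inequalities we are after up to the $\softO$-type constants already implicit in the statement.)

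Next, I would apply \cref{lemma:Arikan} with $\rho = 1/2$. Since $\frac{1}{1+\rho} = \frac{2}{3}$, the inequality yields
$$
    \frac{2^{\frac{1}{2}\ent_{2/3}(\dist)}}{(1+\log|\keys|)^{1/2}} \;\leq\; \sum_{i} i^{1/2} \cdot p_i \;\leq\; 2^{\frac{1}{2}\ent_{2/3}(\dist)}.
$$
Substituting the left-hand side by $\EE[T_\mathsf{QKG}]$ gives exactly the two bounds claimed in \cref{theo:quantumArikan}, and the proof is complete.

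There is no real obstacle here beyond correctly identifying the right value of $\rho$: the theorem is essentially the observation that Arikan's Inequality is flexible enough to handle not only the classical moment $\sum_i i \cdot p_i$ (the $\rho=1$ case, used in \cref{theo:arikan}) but also the square-root moment $\sum_i \sqrt{i}\cdot p_i$ that governs Montanaro's quantum algorithm. The only subtlety worth flagging is that \cref{lemma:Arikan} is stated for arbitrary $\rho > 0$, so the $\rho = 1/2$ instantiation is immediate and no additional analytical work is required. \qed
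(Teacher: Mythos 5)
Your proposal is correct and matches the paper's proof exactly: the paper likewise defines $T_\mathsf{QKG} = \sum_i \sqrt{i}\,p_i$ and obtains the bounds by applying Arikan's Inequality (\cref{lemma:Arikan}) with $\rho = \tfrac{1}{2}$ to \cref{theo:montanaro}. No further comment is needed.
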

\begin{proof}
    Applying Arikan's inequality (\cref{lemma:Arikan}) with $\rho = \frac{1}{2}$ to \cref{theo:montanaro} immediately gives the bounds  on $\EE[T_\mathsf{QKG}]$.
    \qed
\end{proof}

\begin{theorem}[Quantum Speed-up]
    \label{theo:speedup}
    Let $\dist$ be a distribution $\dist$ with support $\keys$.
    On input of a single-key guessing instance $(\dist,\oracle)$,
    Montanaro's algorithm achieves over the optimal classical algorithm quantum speed-up
    $$
        s \geq 2 \cdot \frac{\ent_{1/2}(\dist) - \log(1+\log|\keys|)}{\ent_{2/3}(\dist)}.
    $$
    For a product distribution $\D=\chi^n$ with finite support we obtain
    \[
       s \geq 2 \cdot \frac{\ent_{1/2}(\chi)}{\ent_{2/3}(\chi)}\left(1-\mathcal{O}\left(\frac{\log n}{n}\right)\right).
    \]            
\end{theorem}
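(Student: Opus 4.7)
The plan is to read off the speed-up directly from the two tight entropic runtime bounds already established for the classical and quantum single-key guessing algorithms. Since by definition $s = \log \TC / \log \TQ$, I only need a \emph{lower} bound on $\log \TC$ and an \emph{upper} bound on $\log \TQ$, and both are immediately available.

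First I would invoke \Cref{theo:arikan}, whose lower bound yields
\[
  \log \TC \;\geq\; \ent_{1/2}(\dist) - \log(1+\log|\keys|),
\]
and \Cref{theo:quantumArikan}, whose upper bound yields
\[
  \log \TQ \;\leq\; \tfrac{1}{2}\,\ent_{2/3}(\dist).
\]
Dividing the two gives
\[
  s \;=\; \frac{\log \TC}{\log \TQ} \;\geq\; \frac{\ent_{1/2}(\dist) - \log(1+\log|\keys|)}{\tfrac{1}{2}\ent_{2/3}(\dist)} \;=\; 2\cdot\frac{\ent_{1/2}(\dist) - \log(1+\log|\keys|)}{\ent_{2/3}(\dist)},
\]
which is the first claim. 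Note that the inequality goes in the right direction because both $\log \TC$ and $\log \TQ$ are positive for any non-trivial distribution.

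For the second claim I specialize to $\dist = \chi^n$ with $\chi$ supported on a finite set~$A$. Using the additivity of R\'enyi entropy under product distributions, $\ent_{\alpha}(\chi^n) = n\,\ent_{\alpha}(\chi)$, and the fact that $|\keys|=|A|^n$, the correction term becomes
\[
  \log(1+\log|\keys|) \;=\; \log(1 + n \log|A|) \;=\; \log n + \mathcal{O}(1).
\]
Factoring $n\,\ent_{1/2}(\chi)$ out of the numerator in the first bound and observing that $\ent_{1/2}(\chi)$ is a positive constant depending only on $\chi$ (and not on $n$) gives
\[
  s \;\geq\; 2\cdot\frac{\ent_{1/2}(\chi)}{\ent_{2/3}(\chi)}\left(1 - \frac{\log n + \mathcal{O}(1)}{n\,\ent_{1/2}(\chi)}\right) \;=\; 2\cdot\frac{\ent_{1/2}(\chi)}{\ent_{2/3}(\chi)}\left(1 - \mathcal{O}\!\left(\tfrac{\log n}{n}\right)\right),
\]
as claimed.

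There is essentially no obstacle here since the real work has already been done in \Cref{theo:arikan} and \Cref{theo:quantumArikan}; the one place to be careful is the direction of the inequalities, namely that the \emph{lower} bound on $\TC$ must be paired with the \emph{upper} bound on $\TQ$ to obtain a valid \emph{lower} bound on the ratio. Beyond this, the product-distribution specialization is routine arithmetic using additivity of R\'enyi entropy and the fact that $\log|A|$ is an absolute constant.
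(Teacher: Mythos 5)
Your proposal is correct and follows essentially the same route as the paper: both combine the lower bound on $\log \TC$ from \cref{theo:arikan} with the upper bound on $\log \TQ$ from \cref{theo:quantumArikan}, then specialize to $\dist=\chi^n$ using additivity of R\'enyi entropy and $\log(1+\log|\keys|)=\log n+\mathcal{O}(1)$. Your explicit remark about pairing the lower bound on $\TC$ with the upper bound on $\TQ$ (and positivity of the logarithms) is a nice touch that the paper leaves implicit.
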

\begin{proof}
    Let $\TC$ and $\TQ$ be the expected runtimes of the optimal classic \fullguess{} algorithm and Montanaro's quantum algorithm, respectively. Using \Cref{theo:arikan,theo:quantumArikan}, we obtain quantum speed-up
    \[
    s = \frac{\log(\TC)}{\log(\TQ)} \geq 2 \cdot \frac{\ent_{1/2}(\dist)-\log(1+\log|\K|)}{\ent_{2/3}(\D)}.
    \]
    For a product distribution $\D = \chi^n$ with finite support we may rewrite the speed-up as
    \[
    s \geq 2 \cdot \frac{n\ent_{1/2}(\chi)-\log(\mathcal{O}(n))}{n\ent_{2/3}(\chi)} = 2 \cdot \frac{\ent_{1/2}(\chi)}{\ent_{2/3}(\chi)}\left(1-\mathcal{O}\left(\frac{\log n}{n}\right)\right).
    \]
    \qed
\end{proof}

Recall that for all $\chi$ different from the uniform distribution, we have\linebreak $\ent_{1/2}(\chi) > \ent_{2/3}(\chi)$.
Hence, \cref{theo:speedup} shows that Montanaro's algorithm asymptotically achieves a super-quadratic speed-up for any non-uniform $\chi$.

\subsection{Quantum Multi-Key Guessing}
In \cref{sec:classical}, we showed that the amortized cost for guessing many keys is $\softO\left( 2^{\ent(\chi)n} \right)$, and thus significantly below the cost $\softO\left( 2^{\ent_{\frac 1 2}(\chi)n} \right)$ of guessing a single key.
In this section, we show a similar speed-up for the quantum setting.
To this end, we introduce the algorithm \qguess{} (\cref{alg:qGuess}), which is the quantum analogue of \multiguess{} (\cref{alg:multiGuess}).

\begin{algorithm}[htb]
    \KwIn{Multi-key guessing instance $(\dist,\multiOracle)$, parameter $c$, access to algorithm $\keyenum$}
    \KwOut{$(k_1',\ldots,k_m') \in (\keys \cup \{\bot\})^m$ with $k_i' = k_i$ for at least $cm$ keys $k_i$.}

    $(k_1',\ldots,k_m') \leftarrow (\bot,\ldots,\bot)$
    
    $t \leftarrow 1$

    \Repeat{\normalfont $k_j' \neq \bot$ for at least $cm$ keys $k_j'$}{

        $t \leftarrow 2t$
    
        \For{$j \in [m]$}{

            Initialize superposition $\ket{\Psi} \leftarrow 1/\sqrt{t} \cdot \sum_{i=1}^{t} \ket{i}$.
            
            Run Grover's algorithm on $\ket{\Psi}$ with oracle $\mathbbm{1}_{k_j}(\keyenum(\cdot))$.
            
            If Grover's algorithm did not return $\bot$, apply $\keyenum{}$ to the result and set $k_j'$ to the resulting key.
            
        }
    }
    
    \Return{$(k_1',\ldots,k_m')$}

    \caption{\qguess{}}
    \label{alg:qGuess}
\end{algorithm}

Recall that our classical algorithm \multiguess{} uses\linebreak \abortguess{} to query the oracles $\mathbbm{1}_{k_j}$ on the $t$ most likely keys, for some exponentially increasing parameter $t$.
Analogously, our quantum algorithm runs Grover's algorithm on the $t$ most likely keys.
As a a result, we obtain a square-root speed-up over \multiguess{}'s runtime from \cref{theo:const}.
In particular, we have the following theorem.
\begin{theorem}
    \label{theo:qConst}
    Let $\dist$ be a product distribution $\dist = \chi^n$, where $\chi$ has constant support.
    Let $(\dist,\multiOracle)$ be a multi-key guessing instance with constant parameter $0 < c < 1$.
    \begin{enumerate}
        \item If $c < \frac{1}{2}$, then with probability at least $1 - e^{-\Omega(m)}$, {\qguess{}{}} solves $(\dist,\multiOracle)$ with amortized cost per key
        \begin{equation*}
            T_\mathsf{QMKG} = \softO\left(2^{\frac{\ent(\chi)n}{2}}\right).
        \end{equation*}
        \item For $\frac{1}{2} \leq c < 1$, the amortized cost increases at most by a subexponential factor $2^{\mathcal{O}(\sqrt{n})}$ to $2^{\frac{\ent(\chi)n}{2}+\mathcal{O}(\sqrt{n})}$.
    \end{enumerate}
\end{theorem}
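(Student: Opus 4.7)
The plan is to mirror the proof of \cref{theo:const} but replace the per-iteration cost of the inner loop with Grover's square-root speed-up. The outer structure of \qguess{} is identical to \multiguess{}: we double the candidate-set size $t = 2, 4, \ldots, 2^{\alpha_{\max}}$, and for each target key $k_j$ we attempt to find it among the top $t$ candidates produced by $\keyenum$. The only change is that each attempt is now a single call to Grover's algorithm (\cref{lemma:Grover}) on the uniform superposition over $\{1,\ldots,t\}$ with oracle $\mathbbm{1}_{k_j} \circ \keyenum$, costing $\mathcal{O}(\sqrt{t})$ queries instead of $t$.

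First I would bound the cost of one round: at parameter $t$, the round costs $m \cdot \mathcal{O}(\sqrt{t})$, so summing over the geometrically growing rounds up to $t = 2^{\alpha_{\max}}$ yields total cost $\mathcal{O}\bigl(m \cdot 2^{\alpha_{\max}/2}\bigr)$ absorbed as $\softO\bigl(m \cdot 2^{\alpha_{\max}/2}\bigr)$, i.e. amortized cost per key $T_\mathsf{QMKG} = \softO\bigl(2^{\alpha_{\max}/2}\bigr)$. This is exactly the square root of the classical bound on amortized cost per key from the proof of \cref{theo:const}.

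Next I would bound $\alpha_{\max}$. The key observation is that the event $k_j$ is recovered in the round of parameter $t = 2^{\alpha}$ is precisely $k_j \in \mathcal{M}(\alpha)$, which is independent of whether the search within $\mathcal{M}(\alpha)$ is performed classically or quantumly. Consequently the analysis of $\alpha_{\max}$ from the proof of \cref{theo:const} carries over verbatim: via the core-set inclusion $\coreset \subseteq \mathcal{M}(\lceil \ent(\chi)n + \delta n \rceil)$ from \cref{lemma:sizeCore} together with the mass bounds of \cref{lemma:typical}, a Chernoff argument yields $\alpha_{\max} \le \lceil \ent(\chi)n \rceil$ with probability $\ge 1 - e^{-\Omega(m)}$ for $c < \frac{1}{2}$ (using \cref{eq:typicalUpper}), and $\alpha_{\max} \le \ent(\chi)n + \Theta(\sqrt{n})$ for $\frac{1}{2} \le c < 1$ (using \cref{eq:typicalProb}). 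Plugging these bounds into $T_\mathsf{QMKG}$ gives the two claimed quantum amortized costs.

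The main obstacle I anticipate is the bookkeeping around Grover when $k_j$ is not in the top-$t$ set, so that no index is marked. This is precisely the zero-marked-element case handled by \cref{lemma:Grover}, which returns $\bot$; the outer loop then simply moves on to the next value of $t$, so correctness and the runtime accounting are unaffected. A small auxiliary point is that $\keyenum$ must be invoked inside the quantum oracle in superposition, but since $\keyenum$ is an efficient classical procedure we implement it as a reversible circuit, adding only the usual logarithmic overhead already absorbed into $\softO$. Everything else, including the uniform-distribution edge case, transfers directly from the proof of \cref{theo:const}.
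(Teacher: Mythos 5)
Your proposal is correct and matches the paper's argument: the paper proves \cref{theo:qConst} exactly by observing that \qguess{} replaces each classical scan of the top $t$ candidates with a Grover call of cost $\mathcal{O}(\sqrt{t})$, while the bound on $\alpha_{\max}$ from the proof of \cref{theo:const} is unaffected since success in a round depends only on whether $k_j \in \mathcal{M}(\alpha)$. Your writeup in fact supplies more detail (the geometric-sum accounting, the $\bot$ case, and the reversible implementation of $\keyenum$) than the paper, which leaves these steps implicit.
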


\section{Quantum Speed-ups for Various Distributions }
\label{sec:distribution}

Using \Cref{theo:speedup}, we compute the quantum speed-up $s$ of Montanaro's quantum algorithm over classical key guessing for several distributions of interest.

\paragraph{Bernoulli.} The Bernoulli distribution $\chi=\text{Ber}(p)$ with $X \sim \text{Ber}(p)$ satisfies 
\[
  \Pr[X=1]=p \textrm{ and } \Pr[X=0]=1-p. 
\]
Consider keys sampled from $\D=\chi^n$ as in LPN~\cite{STOC:BluKalWas00}.  The results of applying \Cref{theo:speedup} are depicted in \Cref{fig:bernoulli}.

\begin{wrapfigure}{r}{0.45\textwidth}
     \centering
    \includegraphics[width=0.45\textwidth]{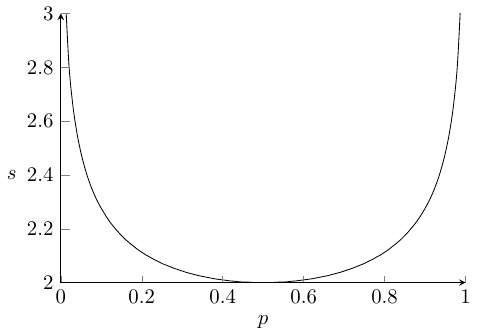}
    \vspace*{-0.8cm}
    \caption{Bernoulli $\text{Ber}(p)$ Speed-up }
    \label{fig:bernoulli}
\end{wrapfigure}

Notice that $\text{Ber}(\frac 12)$ is the uniform distribution over $\{0,1\}^n$. For any \linebreak $p \ne \frac 1 2$, we obtain super-quadratic quantum speed-ups $s$. As a numerical example, we have $s > 2.27$ for $p=0.1$. Notice that LPN typically uses small constant $p$~\cite{AC:HopBlu01}. LPN-based public key encryption~\cite{FOCS:Alekhnovich03}, requires an even q smaller choice of $p = \Theta(n^{-\frac 1 2})$. In this small noise LPN regime, we obtain a {\em super-polynomial} quantum speed-up $s=\Omega(n^{\frac 1 {12}})$.

\vspace{1cm}
\begin{lemma}[Small Noise LPN Quantum Speed-Up] On input of a single-key guessing instance $(\textrm{Ber}(p)^n,\oracle)$ with $p\le\frac12$, Montanaro's algorithm asymptotically achieves a quantum speed-up over classical key guessing $\fullguess{}$  satisfying
\[
  s \geq \frac 1 3 p^{- \frac 1 6}.
\]
Thus, for $p=\Theta(n^{-\frac 1 2})$ Montanaro's algorithm achieves speed-up $s = \Omega(n^{\frac 1 {12}})$.
\end{lemma}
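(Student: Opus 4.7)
The plan is to apply \Cref{theo:speedup} and then obtain asymptotic expansions of $\ent_{1/2}(\text{Ber}(p))$ and $\ent_{2/3}(\text{Ber}(p))$ as $p \to 0$. Since \Cref{theo:speedup} gives, for the product distribution $\chi = \text{Ber}(p)$ on $\{0,1\}$,
\[
s \geq 2 \cdot \frac{\ent_{1/2}(\text{Ber}(p))}{\ent_{2/3}(\text{Ber}(p))}\left(1 - \mathcal{O}\!\left(\tfrac{\log n}{n}\right)\right),
\]
the task reduces to lower-bounding the ratio of Rényi entropies in terms of $p$. The $\mathcal{O}(\log n/n)$ factor is absorbed easily because the final target bound $\frac{1}{3}p^{-1/6}$ carries a factor $4$ of slack against the true asymptotic value (which, as we will see, is $\frac{4}{3}p^{-1/6}$).

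First I would write out the Rényi entropies explicitly from \Cref{def:entropy}:
\[
\ent_{1/2}(\text{Ber}(p)) = 2\log\bigl(\sqrt{p} + \sqrt{1-p}\bigr),\qquad \ent_{2/3}(\text{Ber}(p)) = 3\log\bigl(p^{2/3} + (1-p)^{2/3}\bigr).
\]
Then I would expand each side as $p \to 0$. Using $\sqrt{1-p} = 1 - p/2 + \mathcal{O}(p^2)$ and $\ln(1+x) = x + \mathcal{O}(x^2)$, the numerator becomes
\[
\ent_{1/2}(\text{Ber}(p)) = \frac{2}{\ln 2}\bigl(\sqrt{p} - p + \mathcal{O}(p^{3/2})\bigr) = \frac{2\sqrt{p}}{\ln 2}\bigl(1 - \mathcal{O}(\sqrt{p})\bigr).
\]
Analogously, using $(1-p)^{2/3} = 1 - 2p/3 + \mathcal{O}(p^2)$, we get
\[
\ent_{2/3}(\text{Ber}(p)) = \frac{3p^{2/3}}{\ln 2}\bigl(1 + \mathcal{O}(p^{1/3})\bigr).
\]

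Taking the ratio yields $\ent_{1/2}(\text{Ber}(p))/\ent_{2/3}(\text{Ber}(p)) = \tfrac{2}{3}p^{-1/6}(1 + o(1))$, so by \Cref{theo:speedup}
\[
s \;\geq\; 2 \cdot \tfrac{2}{3} p^{-1/6}(1 + o(1))\bigl(1 - \mathcal{O}(\log n/n)\bigr) \;\geq\; \tfrac{1}{3} p^{-1/6},
\]
where the very generous slack between $\tfrac{4}{3}$ and $\tfrac{1}{3}$ swallows all lower-order corrections for $p$ sufficiently small (equivalently, $n$ large enough in the LPN setting). Substituting $p = \Theta(n^{-1/2})$ gives $s = \Omega\bigl(n^{1/12}\bigr)$, establishing the second claim.

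The main obstacle I anticipate is keeping the expansions rigorous and valid uniformly in $p \leq \tfrac{1}{2}$ rather than only in the strict limit $p \to 0$; however, since the lemma is stated as an asymptotic bound and since the constant $\tfrac{1}{3}$ is much smaller than the leading-order constant $\tfrac{4}{3}$, one can simply verify the inequality for all sufficiently small $p$ and note that the case of $p$ bounded away from $0$ gives a constant (hence trivially $\Omega(1) \geq \tfrac{1}{3} p^{-1/6}$ up to constants once one restricts to the regime $p \to 0$ that is used in the corollary).
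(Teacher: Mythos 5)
Your proposal is correct and follows essentially the same route as the paper: both invoke \cref{theo:speedup} and then reduce to lower-bounding $\ent_{1/2}(\text{Ber}(p))$ by a constant multiple of $\sqrt{p}$ and upper-bounding $\ent_{2/3}(\text{Ber}(p))$ by a constant multiple of $p^{2/3}$. The only real difference is that the paper uses the elementary inequalities $\sqrt{p}+\sqrt{1-p}\ge 1+\sqrt{p}/2$ and $\ln(1+x)\ge x/2$, which hold uniformly for all $p\le\frac{1}{2}$ and directly give the constant $\frac{1}{3}$, whereas your Taylor expansion yields the sharper asymptotic constant $\frac{4}{3}$ only as $p\to 0$ and therefore needs your separate (correct, if somewhat loosely stated) patch via $s\ge 2$ for $p$ bounded away from $0$.
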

\begin{proof}
According to \Cref{theo:arikan}, the runtime of $\fullguess{}$ is tightly bounded by $2^{H_{1/2}(\text{Ber}(p))n}$. For $p \leq \frac 1 2$ we obtain
\[
  H_{\frac 1 2}(\text{Ber}(p))n = 2 \log\left(p^{\frac 12} + (1-p)^{\frac 12}\right)n \geq 2 \log\left(1+\frac{p^{\frac 1 2}}{2}\right)n \geq \frac 1 2 \log(e)p^{\frac 1 2}n.
\]
By \Cref{theo:quantumArikan} the runtime of Montanaro's algorithm is upper bounded by $2^{\frac 1 2 H_{2/3}(\text{Ber}(p))n}$, where
\[
  H_{2/3}(\text{Ber}(p))n = 3  \log\left(p^{\frac 2 3} + (1-p)^{\frac 2 3} \right)n \leq 3  \log\left( 1 + p^{\frac 2 3} \right) n \leq 3  \log(e) p^{\frac 2 3} n.
\]
Therefore, we achieve quantum speed-up
\[
s = 2 \cdot \frac{H_{\frac 1 2}(\text{Ber}(p))n}{ H_{\frac 2 3}(\text{Ber}(p))n} \geq \frac 1 3  p^{-\frac 1 6}. \hfill 
\]
\qed
\end{proof}

\paragraph{Ternary.} 
We define the  Ternary distribution $\chi=\T(p)$,  where $X \sim \T(p)$ satisfies 
\[
  \Pr[X=(-1)]= \Pr[X=1] = \frac p 2 \textrm{ and } \Pr[X=0]=1-p.
\]
\begin{wrapfigure}{r}{0.45\textwidth}
     \centering
    \includegraphics[width=0.45\textwidth]{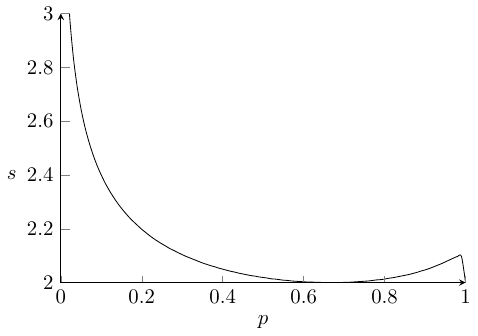}
    \vspace*{-0.8cm}
    \caption{Ternary $\mathcal{T}(p)$ Speed-up}
    \label{fig:ternary}  
\end{wrapfigure}
Consider keys sampled from\linebreak $\D=\chi^n$ as in NTRU-type schemes~\cite{HofPipSil98,CHES:HRSS17}. The results of applying our Quantum Speed-up theorem (\Cref{theo:speedup}) are provided in \Cref{fig:ternary}.

Notice that for $p=\frac 2 3$ we obtain the uniform distribution over $\{-1,0,1\}^n$, and for $p=1$ the uniform distribution over $\{-1,1\}^n$. For any $p \not\in \{\frac 2 3, 1\}$, we obtain super-quadratic quantum speed-ups $s$. As a numerical example, we have $s \approx 2.4$ for $p=0.1$. A typical NTRU choice is $\mathcal{T}(3/8)$ with $s \approx 2.06$. Again, the speed-up $s$ grows to infinity $s$ for $p \to 0$.

\medskip
\medskip

\paragraph{Discrete Gaussian.} We define a discrete Gaussian distribution $\chi=\mathcal{D_{S, \sigma}}$ with support $S$ and standard deviation $\sigma$, where $X \sim \mathcal{D_{S, \sigma}}$ satisfies for all $i \in S$
\[\Pr[X=i] = \frac{1}{c(S)} \cdot e^{\frac{-i^2}{2\sigma^2}} \textrm{ with }
c(S)=\sum_{j \in S} e^{\frac{-j^2}{2\sigma^2}} .
\]
\begin{wrapfigure}{r}{0.49\textwidth}
     \centering
    \includegraphics[width=0.49\textwidth]{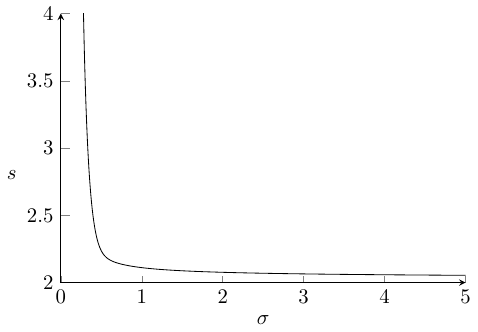}
    \vspace*{-0.8cm}
    \caption{Gauss $\mathcal{D}_{\{-100, \ldots, 100\}, \sigma}$ Speed-up}
    \label{fig:gauss}
\end{wrapfigure}

Consider keys sampled from\linebreak $\D=(\mathcal{D_{S, \sigma}})^n$ as in typical LWE schemes~\cite{STOC:Regev03}. The choice\linebreak $S = \mathbb{Z}$ results in infinite support for the discrete Gaussian. In order to apply our Quantum Speed-up theorem (\Cref{theo:speedup}) we choose a discrete Gaussian\linebreak$\chi=\mathcal{D}_{\{-100, \ldots, 100\}, \sigma}$ with finite support. 
The results are provided in \Cref{fig:gauss}.

For $\sigma=1$, we obtain $s > 2.11$. For large standard deviation $\sigma$ the quantum speed-up converges to $2$, since the discrete Gaussian approaches the uniform distribution. For small $\sigma$, the quantum speed-up $s$ becomes arbitrary large.

\paragraph{Binomial.} The Binomial distribution $\text{Bin}(m, \frac 1 2)$ with   $X \sim \text{Bin}(m,\frac 1 2)$ satisfies
\[
  \Pr[X=i] = \frac{\binom mi}{2^n} \textrm{ for } i=0, \ldots, m.
\]
\newpage
\begin{wrapfigure}{r}{0.45\textwidth}
     \centering
    \includegraphics[width=0.45\textwidth]{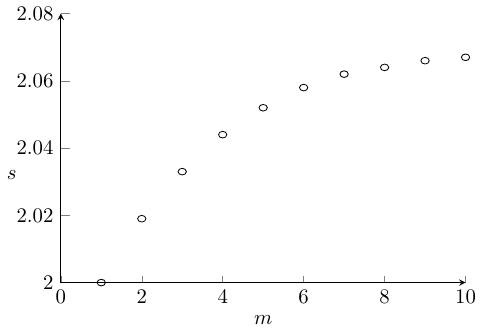}
    \vspace*{-0.8cm}
    \caption{Binomial $\mathcal{B}(m,\frac 12)$ Speed-Up}
    \label{fig:binomial}
\end{wrapfigure}
The Binomial distribution, usually centered around $0$, is used as an efficient replacement of a discrete Gaussian distribution in LWE-based schemes.  The distribution is easy to sample and leads to compact keys with a small support. As an example, Kyber~\cite{bos2018crystals} uses $\text{Bin}(4,\frac 1 2)$ and $\text{Bin}(6, \frac 1 2)$. 
The results of applying \Cref{theo:speedup} to $\text{Bin}(m, \frac 1 2)$ are shown in \Cref{fig:binomial}. For the Kyber choices $m=4$ and~$6$ Montanaro's algorithm achieves quantum speed-ups of 
$s > 2.04$ respectively  $s>2.05$. 

For all $m>1$, one obtains superquadratic speed-ups. For arbitrary large $m$, we still have $s \approx 2.04$.

\paragraph{Zipf.} 
The Zipf distribution $\chi=\Z(N,t)$ with $X \sim \Z(N,t)$ satisfies 
for all\linebreak $i=1, \ldots, N$
\[
  \Pr[X=i] = \frac{1}{c(N,t)} \cdot i^{-t} \textrm{ with } c(N,t) = \sum_{j=1}^N j^{-t}.
\]
\begin{wrapfigure}{r}{0.45\textwidth}
    \centering
    \includegraphics[width=0.45\textwidth]{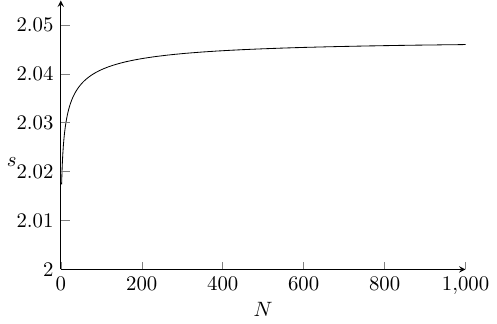}
    \vspace*{-0.8cm}
    \caption{Zipf $\Z(N,0.777)$ Speed-Up }
    \label{fig:zipf}
\end{wrapfigure}
Zipf distributions empirically appear in password database leaks and approximate the observed password distribution. For instance, the well-known LinkedIn database leak with $N=1.6 \cdot 10^8$ passwords can be modelled via $\Z(N, 0.777)$~\cite{malone2012investigating}. Notice that in contrast to all other distributions analyzed in this section, the Zipf distribution is not a product distribution.

Currently, the best known quantum password guessing~\cite{CANS:DGMMS21} achieves a quantum speed-up of (almost) $s=2$ for any Zipf parameter $t$. An application of \Cref{theo:speedup} results in \Cref{fig:zipf}.

For $\Z(N,0.777)$, our analysis improves over~\cite{CANS:DGMMS21} to quantum speed-up $s>2.04$ for password guessing. 

\begin{figure}
\centering
   \subfloat[Geo$(1000,p)$ Speed-up]{
    \includegraphics[width=0.45\textwidth]{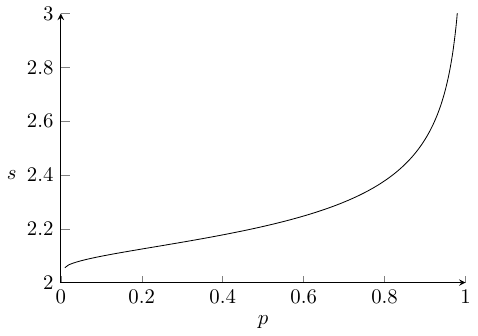}
    \label{fig:geometric}
  }\hfill
  \subfloat[Poi$(1000,\lambda)$ Speed-up]{
    \includegraphics[width=0.45\textwidth]{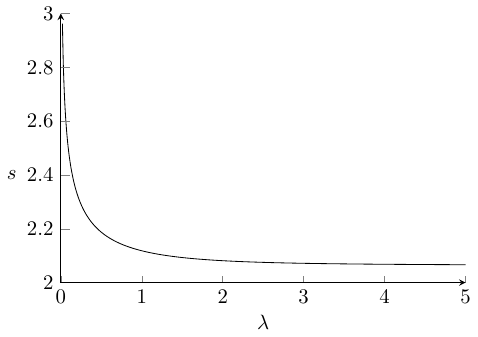}
    \label{fig:poisson}
  }
    \caption{Quantum Speed-ups for Geometric and Poisson distribution}
    
\end{figure}

\medskip

The following two probability distributions, Geometric and Poisson, have to the best of our knowledge not been used for cryptographic key choices. However, due to their wide range of applications in theoretical computer science quantum speed-ups for these distributions might be of independent interest. In order to apply \Cref{theo:speedup}, we define both distributions with finite support.

\paragraph{Geometric.}
The Geometric distribution $\chi=\textrm{Geo}(N,p)$ with $X \sim \operatorname{Geo}(N,p)$ satisfies for $i=0, \ldots , N$
\[
  \Pr[X=i] = \frac{1}{c(N,p)} \cdot (1-p)^i \textrm{ with } c(N,p) = \sum_{j=0}^N(1-p)^j.
\]

Let us consider keys sampled from $(\textrm{Geo}(1000,p))^n$. The quantum speed-ups are depicted in \Cref{fig:geometric}. We observe that we obtain super-quadratic speed-ups for all $p > 0$. When $p$ converges to $1$, our speed-up becomes unbounded.

\paragraph{Poisson.}
The Poisson distribution $\chi = \operatorname{Poi}(N,\lambda)$ with $X \sim \operatorname{Poi}(N,\lambda)$ satisfies for all $i= 0, \ldots , N$
\[
  \Pr[X=i] = \frac 1 {c(N, \lambda)} \cdot \frac{\lambda^k}{k!} \textrm{ with } c(N, \lambda) = \sum_{j=0}^N\frac{\lambda^j}{j!}.
\]
The quantum speed-ups for $\operatorname{Poi}(1000,\lambda)$ are depicted in \Cref{fig:poisson}, with unbounded speed-up for $\lambda \to 0$.

\bibliographystyle{alpha}
\bibliography{abbrev3,crypto,custom}


\end{document}